\newtheorem{lemma}{Lemma}
\newtheorem{theorem}{Theorem}
\newcommand{\mb}{\mathbf}
\newcommand{\ul}{\underline}
\newcommand{\ol}{\overline}
\newcommand{\bs}{\boldsymbol}
\newcommand{\mc}{\mathcal}
\newcommand{\ds}{\displaystyle}
\begin{document}

\title{Power Allocation Games in Wireless Networks of Multi-antenna
Terminals}

\author{
Elena-Veronica~Belmega, Samson~Lasaulce, M\'erouane~Debbah,
Marc~Jungers, and Julien~Dumont

\thanks{E.~V. Belmega and S.~Lasaulce are with LSS (joint lab of CNRS, Sup\'{e}lec, Univ. Paris-Sud 11), Sup\'{e}lec,
Plateau du Moulon, 91192 Gif-sur-Yvette Cedex, France, \{belmega,lasaulce\}@lss.supelec.fr; M. Debbah is with the Alcatel-Lucent Chair on
Flexible Radio, Sup\'{e}lec, Plateau du Moulon, 91192 Gif-sur-Yvette Cedex, France, merouane.debbah@supelec.fr; M. Jungers is with CRAN,
Nancy-Universit\'e, CNRS, 2 avenue de la For\^et de Haye, 54516 Vandoeuvre-les-Nancy, France, marc.jungers@cran.uhp-nancy.fr; J. Dumont is with
Lyc\'ee Chaptal, 45 Boulevard des Batignolles, 75008 Paris, France, dumont@crans.org}}

\maketitle

\begin{abstract}
 We consider wireless networks that can be modeled by
multiple access channels in which all the terminals are equipped
with multiple antennas. The propagation model used to account for
the effects of transmit and receive antenna correlations is the
unitary-invariant-unitary model, which is one of the most general
models available in the literature. In this context, we introduce and
analyze two resource allocation games. In both games, the mobile
stations selfishly choose their power allocation policies in order
to maximize their individual uplink transmission rates; in
particular they can ignore some specified centralized policies. In
the first game considered, the base station implements successive
interference cancellation (SIC) and each mobile station chooses his
best space-time power allocation scheme; here, a coordination
mechanism is used to indicate to the users the order in which the
receiver applies SIC. In the second framework, the base
station is assumed to implement single-user decoding. For these two
games a thorough analysis of the Nash equilibrium is provided: the
existence and uniqueness issues are addressed; the corresponding
power allocation policies are determined by exploiting random matrix
theory; the sum-rate efficiency of the equilibrium is studied
analytically in the low and high signal-to-noise ratio regimes and by
simulations in more typical scenarios. Simulations show that, in
particular, the sum-rate efficiency is high for the type of systems
investigated and the performance loss due to the use of the proposed
suboptimum coordination mechanism is very small.

\end{abstract}

\begin{IEEEkeywords}
MIMO, MAC, non-cooperative games, Nash equilibrium, power
allocation, price of anarchy, random matrix theory.
\end{IEEEkeywords}

\section{Introduction} \label{sec:intro}

In this paper, we consider the uplink of a decentralized network of
several mobile stations (MS) and one base station (BS). This type of
network is commonly referred to as the decentralized multiple access
channel (MAC). The network is said to be decentralized in the sense
that each user can freely choose his power allocation (PA) policy in
order to selfishly maximize a certain individual performance
criterion, which is called utility or payoff. This means that, even
if the the BS broadcasts some specified policies, every user is free
to ignore the policy intended for him if the latter does not
maximize his performance criterion.

To the best of the authors' knowledge, the problem of decentralized
PA in wireless networks has been properly formalized for the first
time in \cite{grandhi-allerton-1992, grandhi-comm-1994}.
Interestingly, this problem can be formulated quite naturally as a
non-cooperative game with  different performance criteria
(utilities) such as the carrier-to-interference ratio
\cite{ji-wn-1998}, aggregate throughput \cite{oh-infocom-2000} or
energy efficiency \cite{goodman-pcomm-2000},
\cite{meshkati-jsac-2006}. In this paper, we assume that the users
want to maximize information-theoretic utilities and more precisely
their Shannon transmission rates. Indeed, the point of view adopted
here is close to the one proposed by the authors of
\cite{yu-jsac-2002} for DSL (digital subscriber lines) systems,
which are modeled as a parallel interference channel;
\cite{lai-it-2008} for the single input single output (SISO) and
single input multiple output (SIMO) fast fading MACs with global
CSIR and global CSIT (Channel State Information at the
Receiver/Transmitters); \cite{lasaulce-gamecomm-2007} for MIMO
(Multiple Input Multiple Output) MACs with global CSIR, channel
distribution information at the transmitters (global CDIT) and
single-user decoding (SUD) at the receivers; \cite{arslan-wc-2007,
scutari-jsac-2008} for Gaussian MIMO interference channels with
global CSIR and local CSIT and, by definition of the conventional
interference channel \cite{carleial-it-1978}, SUD at the receivers.
Note that reference \cite{palomar-it-2003} where the authors
considered Gaussian MIMO MACs with neither CSIT nor CDIT differs
from our approach and that of \cite{yu-jsac-2002, lai-it-2008,
lasaulce-gamecomm-2007, arslan-wc-2007, scutari-jsac-2008} because
in \cite{palomar-it-2003} the MIMO MAC is seen as a two-player
zero-sum game where the first player is the group of transmitters
and the second player is the set of MIMO sub-channels. The closest
works to the work presented here are \cite{lasaulce-gamecomm-2007}
and \cite{belmega-wnc3-2008}. Although this paper is in part based
on these works, it still provides significant contributions w.r.t.
to them, as explained below.

In \cite{lasaulce-gamecomm-2007}, the authors consider MIMO multiple access channels and assume SUD at the BS; the authors formulate the PA
problem into a team game in which each user chooses his PA to maximize the network sum-rate. In \cite{belmega-wnc3-2008}, the same type of
decentralized networks is considered but SIC is assumed at the BS. As each user needs to know his decoding rank in order to adapt his PA policy
to maximize his individual transmission rate, a coordination mechanism has to be introduced: the coordination signal precisely indicates to all
the users the decoding order used by the receiver. The present paper differs from these two contributions on at least four important technical
points: (i) when SUD is assumed, the PA game is not formulated as a team game but as a non-cooperative one; (ii) we exploit several proof
techniques that are different from \cite{lasaulce-gamecomm-2007}; (iii) while \cite{lasaulce-gamecomm-2007} and \cite{belmega-wnc3-2008} assume
a Kronecker propagation model with \emph{common} receive correlation we assume here a more general model, the unitary-invariant-unitary (UIU)
propagation model introduced by \cite{tulino-it-2005}, for which the users can have different receive antenna correlation profiles. This is
useful in practice since, for instance, it allows one to study propagation scenarios where some users can be in line of sight with the BS (the
receive antenna are strongly correlated) whereas other users can be surrounded by many obstacles, which can strongly decorrelate the receive
antennas for these users; (iv) while the authors of \cite{belmega-wnc3-2008} restricted their attention to either a purely spatial PA problem or
a purely temporal PA problem, we tackle here the general space-time PA problem.

In this context, our main objective is to study the equilibrium of
two power allocation games associated with the two types of decoding
schemes aforementioned (namely SIC and SUD). The motivation for this
is that the existence of an equilibrium allows network designers to
predict, with a certain degree of stability, the effective operating
state(s) of the network. Clearly, in our context, uniqueness is a
desirable feature of the equilibrium. As it will be seen, it is
possible to prove the existence in both games under investigation.
Uniqueness is proven in the case of SUD while it is conjectured for
the case of SIC. In order to establish the corresponding results,
the paper is structured as follows. After presenting the general
system model in Sec. \ref{sec:system-model-mimo}, we analyze in
detail the space-time PA game when SIC and a corresponding
coordination mechanism are assumed (Sec. \ref{sec:SIC}). For this
game, the existence and uniqueness of the NE are proven and the
equilibrium is determined by exploiting random matrix theory when
the numbers of antennas are sufficiently large. Its sum-rate
efficiency is also analyzed. In Sec. \ref{sec:SUD}, we analyze the
case of SUD since this decoding scheme, although suboptimal in terms
of performance (even in the case of a network with single-antenna
terminals), has some features that can be found desirable in some
contexts: the receiver complexity is low, there is no need for a
coordination signal, there is no propagation error since the data
flows are decoded in parallel and not successively and also it is
intrinsically fair. To analyze the case of the SUD-based PA game, we
will follow the same steps as in Sec. \ref{sec:SIC} and we will see
that, the equilibrium analysis can be deduced, to a large extent,
from the SIC case. Numerical results are provided in Sec.
\ref{sec:simulation-results} to illustrate our theoretical analysis
and to better assess the sum-rate efficiency of the considered
games. Sec. \ref{sec:conclusions} corresponds to the conclusion.


\section{System Model}
\label{sec:system-model-mimo}

We assume a MAC with arbitrary number of users, $K\geq 2$. Regarding the original definition of the MAC by \cite{wyner-it-1974} and
\cite{cover-book-1975}, the system under consideration has two common features: all transmitters send at once and at different rates over the
entire bandwidth, and the transmitters are using good codes in the sense of the Shannon rate. Our system differs from
\cite{wyner-it-1974}\cite{cover-book-1975} in the sense that multiple antennas are considered at the terminal nodes, channels vary over time and
the BS does not dictate the PA policies to the MSs. Also, we assume the existence of coordination signal which is perfectly known to all the
terminals. If the coordination signal is generated by the BS itself, this induces a certain cost in terms of downlink signaling but the
distribution of the coordination signal can then be optimized. On the other hand, if the coordination signal comes from an external source,
e.g., an FM transmitter, the MSs can acquire their coordination signal for free in terms of downlink signaling. However this generally involves
a certain sub-optimality in terms of uplink rate. In both cases, the coordination signal will be represented by a random variable denoted by $S
\in \mc{S}$. Since we study the $K-$user MAC, $\mc{S} = \{0,1,..., K!\}$ is a $K!+1$-element alphabet. When the realization is in $\{1,..., K!
\}$ , the BS applies SIC with a certain decoding order (game 1). When $S=0$ the BS always applies SUD (game 2), where all users are decoded
simultaneously (no interference cancellation). In a real wireless system the frequency at which the realizations would be drawn would be roughly
proportional to the reciprocal of the channel coherence time (i.e., $1/T_{\mathrm{coh}}$). Note that the proposed coordination mechanism is suboptimal because it does not depend on the realizations of the channel matrices. We will see that the corresponding
performance loss is in fact very small.

We will further consider that each mobile station is equipped with
$n_t$ antennas whereas the base station has $n_r$ antennas (thus we
assume the same number of transmitting antennas for all the users).
In our analysis, the flat fading channel matrices of the different
links vary from symbol vector (or space-time codeword) to symbol
vector. We assume that the receiver knows all the channel matrices
(CSIR) whereas each transmitter has only access to the statistics of
the different channels (CDIT). The equivalent baseband signal
received by the base station can be written as:
\begin{equation}\label{eq:system-model-mimo}
\ul{Y}^{(s)}(\tau)=\sum_{k=1}^K \bs{H}_k(\tau) \ul{X}_k^{(s)}(\tau)
+ \ul{Z}^{(s)}(\tau),
\end{equation}
where $\ul{X}_k^{(s)}(\tau)$ is the $n_t$-dimensional column vector of symbols transmitted by user $k$ at time $\tau$ for the realization $s \in
\mc{S}$ of the coordination signal, $\mb{H}_k(\tau) \in \mathbb{C}^{n_r \times n_t} $ is the channel matrix (stationary and ergodic process) of
user $k$ and $\ul{Z}^{(s)}(\tau)$ is a $n_r$-dimensional complex white Gaussian noise distributed as $\mathcal{N}(\ul{0}, \sigma^2
\mathbf{I}_{n_r})$. For the sake of clarity we will omit the time index $\tau$ from our notations.

In order to take into account the antenna correlation effects at the
transmitters and receiver, we will assume the different channel
matrices to be structured according to the
unitary-independent-unitary  model introduced in
\cite{tulino-it-2005}: 
\begin{equation}
\label{eq:uiu-model} \forall k \in \{1,...,K\}, \
\mb{H}_k=\mb{V}_k\tilde{\mb{H}}_k\mb{W}_k,
\end{equation}
 where $\mb{V}_k$
and $\mb{W}_k$ are deterministic unitary matrices that allow one to
take into consideration the correlation effects at the receiver and
transmitter. Also $\tilde{\mb{H}}_k$ is an $n_r \times n_t$ matrix
whose entries are zero-mean independent complex Gaussian random
variables with an arbitrary profile of variances, such that
$\mathbb{E}|\tilde{H}_k(i,j)|^2 = \frac{\sigma_k(i,j)}{n_t}$. The
Kronecker propagation model for which the channel transfer matrices
factorizes as
$\mb{H}_k=\mb{R}_k^{1/2}\tilde{\bs{\Theta}}_k\mb{T}_k^{1/2}$ is a
special case of the UIU model where the profile of variances is
separable i.e., $\mathbb{E}|\tilde{H}_k(i,j)|^2 =
\frac{d_k^{(\mathrm{R})}(i) d_k^{(\mathrm{T})}(j)}{n_t}$, with for
each $k$: $\bs{\Theta}_k$ is a random matrix with zero-mean i.i.d.
entries, $\mb{T}_k$ is the transmit antenna correlation matrix,
$\mb{R}_k$ is the receive antenna correlation matrix,
$\{d_k^{(\mathrm{T})}(j)\}_{j \in \{1,\hdots,n_t\}}$ and $\{d_k^{(\mathrm{R})}(i)\}_{i\in \{1,\hdots,n_r\} }$ are their
associated eigenvalues. In this paper we will consider that
$\mb{V}_k = \mb{V}$ for all users. The reason for assuming this will
be made clearer a little further. In spite of this simplification, we
will still be able to deal with some useful scenarios where the
users see different propagation conditions in terms of receive
antenna correlation.


\section{Successive Interference Cancellation}
\label{sec:SIC}


When SIC is assumed at the BS, the strategy of user $k \in \{1,2,...,K \}$, consists in choosing the best vector of precoding matrices $\mb{Q}_k
= \left(\mb{Q}_k^{(1)} , \mb{Q}_k^{(2)}, ..., \mb{Q}_k^{(K!)} \right)$ where $\mb{Q}_k^{(s)} = \mathbb{E} \left[\ul{X}_k^{(s)} \ul{X}_k^{(s),H}
\right]$, for $s \in \mc{S}$, in the sense of his utility function. For clarity sake, we will introduce another notation which will be used in the remaining of this section to replace the
realization $s$ of the coordination signal. We denote by $\mc{P}_K$ the set of all possible permutations of $K$
elements, such that $\pi \in \mc{P}_k$ denotes a certain decoding order for the $K$ users and $\pi(k)$ denotes the rank of user $k\in\mc{K}$ and
$\pi^{-1}\in \mc{P}_K$ denotes the inverse permutation (i.e. $\pi^{-1}(\pi(k))=k$) such that $\pi^{-1}(r)$ denotes the index of the user that is
decoded with rank $r\in \mc{K}$. We denote by $p_{\pi} \in [0,1]$ the probability that the receiver implements the decoding order $\pi \in
\mc{P}_K$, which means that $\ds{\sum_{\pi \in \mc{P}_K}} p_{\pi}=1$. At last note that there is a one-to-one mapping between the set of
realizations of the coordination signal $\mc{S}$ and the set of permutations $\mc{P}_K$, i.e. $\xi : \mc{S} \rightarrow \mc{P}_k$ such that
$\xi(\cdot)$ is a bijective function. This is the reason why the index $s$ can be replaced with the index $\pi$ without introducing any
ambiguity or loss of generality. The vector of precoding matrices can be denoted by $\mb{Q}=  \left(\mb{Q}_k^{(\pi)}\right)_{\pi \in
\mc{P}_K}$ and the utility function can be written as:

\begin{equation}
\label{eq:utility-mimo} u_k^{\mathrm{SIC}}(\mb{Q}_k, \mb{Q}_{-k}) =
\sum_{\pi \in \mc{P}_K}p_{\pi} R_k^{(\pi)}(\mb{Q}_k^{(\pi)},
\mb{Q}_{-k}^{(\pi)})
\end{equation}
where
\begin{equation}
\label{eq:mimo-st-rates} R_k^{(\pi)}(\mb{Q}_k^{(\pi)},
\mb{Q}_{-k}^{(\pi)}) = \mathbb{E} \log_2 \left|\mb{I}+\rho
\mb{H}_k\mb{Q}_k^{(\pi)}\mb{H}_k^H + \rho \ds{\sum_{\ell \in
\mc{K}_{k}^{(\pi)}}}
\mb{H}_{\ell}\mb{Q}_{\ell}^{(\pi)}\mb{H}_{\ell}^H \right| -
\mathbb{E} \log_2 \left|\mb{I}+ \rho \ds{\sum_{\ell \in
\mc{K}_{k}^{(\pi)}}}
\mb{H}_{\ell}\mb{Q}_{\ell}^{(\pi)}\mb{H}_{\ell}^H \right|
\end{equation}
with $\rho = \frac{1}{\sigma^2}$ and $\mc{K}_{k}^{(\pi)} =
\left\{\ell \in \mc{K}| \pi(\ell)\geq \pi(k) \right\}$ represents,
for a given decoding order $\pi$, the subset of users that will be
decoded after user $k$. Also, we use the standard notation $-k$ , which stands for the
other players than $k$. An important point to mention here is the
power constraint under which the utilities are maximized. Indeed for
user $k \in \{1,...,K\}$, the strategy set is defined as follows:
\begin{equation}
\begin{array}{lcl}  \mc{A}_k^{\mathrm{SIC}} = \left\{
\mb{Q}_k=\left(\mb{Q}_k^{(\pi)} \right)_{\pi \in \mc{P}_K} \right. &
| &
 \forall \pi \in \mc{P}_K, \mb{Q}_k^{(\pi)} \succeq
0,    \left. \ds{\sum_{\pi \in \mc{P}_K}} p_{\pi}
\mathrm{Tr}(\mb{Q}_k^{(\pi)})\leq n_t \ol{P}_k \right\}.
\end{array}
\end{equation}
In order to tackle the existence and uniqueness issues for Nash
equilibria in the general space-time PA game, we exploit and extend
the results from Rosen \cite{rosen-eco-1965}, which we will briefly
state here below in order to make this paper sufficiently
self-contained.

\begin{theorem}
\label{theorem-rosen-1} \cite{rosen-eco-1965} \emph{Let $\mc{G} =
(\mc{K}, \{\mc{A}_k\}_{k\in\mc{K}},\{u_k\}_{k\in\mc{K}}) $ be a game
where $\mc{K} = \{1,...,K\}$ is the set of players,
$\mc{A}_1,...,\mc{A}_K$ the corresponding sets of strategies and
$u_1,...,u_k$ the utilities of the different players. If the
following three conditions are satisfied: (i) each $u_k$ is
continuous in the all the strategies $\ul{a}_j \in \mc{A}_j, \forall
j \in \mc{K}$; (ii) each $u_k$ is concave in $\ul{a}_k \in
\mc{A}_k$; (iii) $\mc{A}_1, ..., \mc{A}_K$ are compact and convex
sets; then $\mc{G}$ has at least one NE.} \label{theo-1-rosen}
\end{theorem}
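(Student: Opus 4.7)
The plan is to derive the existence of a Nash equilibrium as a fixed point of the joint best-response correspondence, via Kakutani's fixed-point theorem. Define the joint strategy set $\mc{A} = \mc{A}_1 \times \cdots \times \mc{A}_K$ and, for each $\ul{a} = (\ul{a}_1,\ldots,\ul{a}_K) \in \mc{A}$, put
\[
B_k(\ul{a}_{-k}) \;=\; \arg\max_{\ul{b}_k \in \mc{A}_k} u_k(\ul{b}_k, \ul{a}_{-k}), \qquad B(\ul{a}) \;=\; \prod_{k \in \mc{K}} B_k(\ul{a}_{-k}).
\]
By definition, $\ul{a}^\star$ is an NE of $\mc{G}$ if and only if $\ul{a}^\star \in B(\ul{a}^\star)$, so it is enough to produce a fixed point of $B$.

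First I would verify the structural hypotheses of Kakutani's theorem. By (iii), each $\mc{A}_k$ is nonempty, compact, and convex, hence so is the finite product $\mc{A}$. For fixed $\ul{a}_{-k}$, the map $\ul{b}_k \mapsto u_k(\ul{b}_k, \ul{a}_{-k})$ is continuous on the compact set $\mc{A}_k$ by (i), so by Weierstrass it attains its maximum and $B_k(\ul{a}_{-k}) \neq \emptyset$. Concavity of this map, granted by (ii), implies that its argmax set is convex. Therefore each $B(\ul{a})$ is a nonempty convex subset of $\mc{A}$.

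The remaining, and main, step is to show that $B$ has a closed graph. Working in the Euclidean/Hermitian matrix setting of the paper, I would argue sequentially: let $\ul{a}^{(n)} \to \ul{a}^\star$ in $\mc{A}$ and $\ul{b}^{(n)} \to \ul{b}^\star$ with $\ul{b}_k^{(n)} \in B_k(\ul{a}_{-k}^{(n)})$; for every test point $\ul{c}_k \in \mc{A}_k$ write
\[
u_k\bigl(\ul{b}_k^{(n)}, \ul{a}_{-k}^{(n)}\bigr) \;\geq\; u_k\bigl(\ul{c}_k, \ul{a}_{-k}^{(n)}\bigr),
\]
and pass to the limit using the joint continuity (i) to obtain $u_k(\ul{b}_k^\star, \ul{a}_{-k}^\star) \geq u_k(\ul{c}_k, \ul{a}_{-k}^\star)$. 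This gives $\ul{b}_k^\star \in B_k(\ul{a}_{-k}^\star)$ for every $k$, hence $\ul{b}^\star \in B(\ul{a}^\star)$, proving that the graph of $B$ is closed. Kakutani's theorem then yields a fixed point of $B$, i.e.\ an NE of $\mc{G}$.

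The delicate ingredient is really the joint continuity in (i): it is what allows the inequality above to be preserved in the limit and, equivalently, what makes Berge's maximum theorem applicable as a one-line alternative to the sequential argument. Everything else is a routine packaging of compactness, concavity, and convexity of finite products; no detailed computation is needed.
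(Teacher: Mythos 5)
Your proof is correct and is essentially the classical argument: the paper itself does not reprove this result but cites Rosen, whose own proof likewise applies Kakutani's fixed-point theorem to a best-response-type correspondence (Rosen aggregates the players into a single weighted maximization rather than taking the product of individual argmax sets, but the nonemptiness/convexity/closed-graph verification is the same). No gaps; the only cosmetic point is that condition (iii) should be read as also asserting nonemptiness of the $\mc{A}_k$, which you implicitly use.
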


\begin{theorem}
\label{theorem-rosen-2} \cite{rosen-eco-1965} \emph{Consider the
$K$-player concave game of Theorem \ref{theo-1-rosen}. If the
following (diagonally strict concavity) condition is met: for all
$k\in\mc{K}$ and for all $(\ul{a}_k',\ul{a}_k'') \in \mc{A}_k^2$
such that there exists at least one index $j\in \mc{K}$ for which
$\ul{a}_j' \neq \ul{a}_j''$,
$\ds{\sum_{k=1}^K(\ul{a}_k''-\ul{a}_k')^{T}\left[\nabla_{\ul{a}_k}u_k(\ul{a}_k',\ul{a}_{-k}')
- \nabla_{\ul{a}_k}u_k(\ul{a}_k'',\ul{a}_{-k}'')\right]>0}$; then
the uniqueness of the NE is insured.} \label{theo-2-rosen}
\end{theorem}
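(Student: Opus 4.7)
The plan is to argue by contradiction, exploiting the variational characterization of a Nash equilibrium in a concave game together with the diagonally strict concavity (DSC) hypothesis. Suppose there exist two distinct Nash equilibria $\ul{a}'$ and $\ul{a}''$ in $\mc{A}_1\times\cdots\times\mc{A}_K$. Distinctness ensures that there is at least one coordinate $j\in\mc{K}$ with $\ul{a}_j'\neq\ul{a}_j''$, so the DSC inequality can be applied to the pair $(\ul{a}',\ul{a}'')$; it then suffices to derive the opposite, non-strict inequality from the equilibrium conditions themselves in order to reach a contradiction.

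The key step is to translate the equilibrium property at each profile into a first-order variational inequality. Since, by Theorem \ref{theo-1-rosen}, $u_k$ is concave in $\ul{a}_k$ over the convex set $\mc{A}_k$, the fact that $\ul{a}_k'$ maximizes $\ul{a}_k\mapsto u_k(\ul{a}_k,\ul{a}_{-k}')$ on $\mc{A}_k$ is equivalent to
\begin{equation*}
(\ul{a}_k''-\ul{a}_k')^{T}\nabla_{\ul{a}_k}u_k(\ul{a}_k',\ul{a}_{-k}')\leq 0\quad\forall\,\ul{a}_k''\in\mc{A}_k,
\end{equation*}
and by exchanging the roles of $\ul{a}'$ and $\ul{a}''$ one obtains symmetrically
\begin{equation*}
(\ul{a}_k'-\ul{a}_k'')^{T}\nabla_{\ul{a}_k}u_k(\ul{a}_k'',\ul{a}_{-k}'')\leq 0.
\end{equation*}
Adding these two inequalities for each $k$ and summing over $k\in\mc{K}$ yields
\begin{equation*}
\sum_{k=1}^{K}(\ul{a}_k''-\ul{a}_k')^{T}\left[\nabla_{\ul{a}_k}u_k(\ul{a}_k',\ul{a}_{-k}')-\nabla_{\ul{a}_k}u_k(\ul{a}_k'',\ul{a}_{-k}'')\right]\leq 0,
\end{equation*}
which directly contradicts the DSC hypothesis and thus forces $\ul{a}'=\ul{a}''$.

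The main obstacle, and essentially the only non-trivial step, is justifying the equivalence between the Nash property of $\ul{a}_k'$ and the one-sided variational inequality displayed above. I would establish this using the gradient inequality for concave differentiable functions, namely $u_k(\ul{a}_k'',\ul{a}_{-k}')\leq u_k(\ul{a}_k',\ul{a}_{-k}')+(\ul{a}_k''-\ul{a}_k')^{T}\nabla_{\ul{a}_k}u_k(\ul{a}_k',\ul{a}_{-k}')$, combined with the optimality of $\ul{a}_k'$; the convexity of $\mc{A}_k$ is precisely what makes the direction $\ul{a}_k''-\ul{a}_k'$ feasible for every choice of $\ul{a}_k''\in\mc{A}_k$. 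A subtlety worth flagging before the theorem is invoked in Sec. \ref{sec:SIC} and Sec. \ref{sec:SUD} is that $u_k$ must be differentiable in $\ul{a}_k$; this will hold in our MIMO setting because each $u_k$ is an expectation of $\log\det$ functionals of the precoding covariances $\mb{Q}_k^{(\pi)}$ and is therefore smooth on the interior of the positive semidefinite cone.
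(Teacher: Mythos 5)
Your argument is correct and reaches the desired contradiction, but it does so by a different mechanism than the paper. The paper never proves Theorem~2 itself (it is quoted from Rosen); the proof it does supply, in Appendix~B for the matrix generalization (Theorem~4), writes out the full Kuhn--Tucker system at each of the two hypothesized equilibria --- multipliers $\widetilde{\lambda}_k,\widehat{\lambda}_k\geq 0$ for the trace constraints, rank-one positive semidefinite matrices $\widetilde{\Phi}_k^{(\pi)},\widehat{\Phi}_k^{(\pi)}$ for the cone constraints, and the complementary-slackness identities --- then substitutes the gradient expressions $\nabla_{\mb{Q}_k^{(\pi)}}u_k = p_{\pi}\lambda_k\mb{I}-\Phi_k^{(\pi)}$ into the diagonally-strict-concavity sum and verifies term by term that the result is nonpositive. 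You instead use the first-order variational inequality $(\ul{a}_k''-\ul{a}_k')^{T}\nabla_{\ul{a}_k}u_k(\ul{a}_k',\ul{a}_{-k}')\leq 0$ satisfied at any maximizer of a differentiable function over a convex set, apply it symmetrically at both equilibria, and add. The two routes are logically equivalent --- the variational inequality is exactly what remains of the KKT conditions after the multiplier terms are eliminated by complementary slackness --- but yours is shorter, requires no constraint qualification and no explicit description of $\mc{A}_k$, and in fact only needs the \emph{necessity} of the first-order condition (obtained by letting $t\to 0^{+}$ along the feasible segment $\ul{a}_k'+t(\ul{a}_k''-\ul{a}_k')$, which is where convexity of $\mc{A}_k$ enters; the gradient inequality you cite is really the sufficiency direction and is not needed here). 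The paper's multiplier-based version buys something in return: the same KKT system is reused later to derive the water-filling characterization of the equilibrium, and it is already phrased for the matrix setting. If you want your argument to cover Theorem~4 as well, simply restate the variational inequality with the trace inner product, $\mathrm{Tr}\left\{(\mb{Q}_k''-\mb{Q}_k')\nabla_{\mb{Q}_k}u_k(\mb{Q}_k',\mb{Q}_{-k}')\right\}\leq 0$; the argument goes through verbatim. Your closing remark on differentiability is apposite and is indeed assumed implicitly by the paper.
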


In the space-time power allocation game under investigation, the obtained results are stated in the following theorem.

\begin{theorem}
\label{thm-st} \emph{[Existence of an NE] The joint space-time power allocation game described by: the set of players $k \in \mc{K}$; the sets
of actions $\mc{A}_k^{\mathrm{SIC}}$ and the utility functions $u_k^{\mathrm{SIC}}(\mb{Q}_k,\mb{Q}_{-k})$ given in (\ref{eq:utility-mimo}), has
a Nash equilibrium.}
\end{theorem}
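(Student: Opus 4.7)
The plan is to invoke Rosen's existence result (Theorem \ref{theorem-rosen-1}) applied to the $K$-player game whose player set is $\mc{K}$, strategy sets are $\mc{A}_k^{\mathrm{SIC}}$, and utilities are given by (\ref{eq:utility-mimo}). This reduces the proof to checking three conditions: (a) each $\mc{A}_k^{\mathrm{SIC}}$ is compact and convex; (b) $u_k^{\mathrm{SIC}}$ is jointly continuous in $\mb{Q}=(\mb{Q}_1,\ldots,\mb{Q}_K)$; (c) $u_k^{\mathrm{SIC}}$ is concave in $\mb{Q}_k$ for any fixed $\mb{Q}_{-k}$. The strategy set is the Cartesian collection of the $\mb{Q}_k^{(\pi)}$'s subject to the PSD constraints $\mb{Q}_k^{(\pi)} \succeq 0$ and the single linear constraint $\sum_{\pi} p_\pi \mathrm{Tr}(\mb{Q}_k^{(\pi)}) \leq n_t \ol{P}_k$; convexity follows because the PSD cone is convex and the remaining constraint is linear, while compactness follows from closedness together with the fact that the trace controls the Frobenius norm on the PSD cone, so each $\mb{Q}_k^{(\pi)}$ with $p_\pi>0$ lies in a bounded set (coordinates associated with $p_\pi=0$ do not affect the utility and can be fixed to zero without loss of generality).

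For continuity, I would observe that each expected rate $R_k^{(\pi)}$ in (\ref{eq:mimo-st-rates}) is the difference of two expectations of $\log_2|\cdot|$ applied to matrices of the form $\mb{I}+\rho\sum_\ell \mb{H}_\ell\mb{Q}_\ell^{(\pi)}\mb{H}_\ell^H$, which are affine in the precoding matrices and always positive definite because of the identity term. Hence the integrands are continuous in $\mb{Q}$; a uniform bound on $\mathrm{Tr}(\mb{Q}_k^{(\pi)})$ combined with the UIU assumption on $\mb{H}_k$ supplies an integrable envelope, so dominated convergence transfers continuity through the expectation, and the finite convex combination $\sum_\pi p_\pi R_k^{(\pi)}$ inherits it.

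The heart of the proof is concavity. The key structural observation is that, for each $\pi$, the subtracted term $\mathbb{E}\log_2|\mb{I}+\rho\sum_{\ell\in\mc{K}_k^{(\pi)}}\mb{H}_\ell\mb{Q}_\ell^{(\pi)}\mb{H}_\ell^H|$ does not involve $\mb{Q}_k^{(\pi)}$, since by definition the index $k$ is excluded from $\mc{K}_k^{(\pi)}$ (the interference remaining after user $k$ has been successively cancelled). Consequently, as a function of $\mb{Q}_k^{(\pi)}$ alone, $R_k^{(\pi)}$ equals $\mathbb{E}\log_2|\mb{A}_\pi+\rho\mb{H}_k\mb{Q}_k^{(\pi)}\mb{H}_k^H|$ up to an additive constant, where $\mb{A}_\pi \succ 0$ almost surely. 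Each such expression is concave in $\mb{Q}_k^{(\pi)}$ by the standard concavity of $\mb{X}\mapsto \log\det\mb{X}$ on the positive-definite cone (combined with composition by an affine map) and expectation preserves concavity. Finally, because the dependence on $\mb{Q}_k^{(\pi)}$ is decoupled across the permutation index $\pi$, concavity of the nonnegative combination $u_k^{\mathrm{SIC}}=\sum_\pi p_\pi R_k^{(\pi)}$ in the full tuple $\mb{Q}_k=(\mb{Q}_k^{(\pi)})_\pi$ is immediate.

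The main obstacle is therefore not analytical but notational: one must keep track of the roles played by the permutation index $\pi$ and the decoding subset $\mc{K}_k^{(\pi)}$ in order to see the key cancellation that makes the subtracted term independent of $\mb{Q}_k^{(\pi)}$; once this is identified, the three hypotheses of Theorem \ref{theorem-rosen-1} are all in place and its direct application yields at least one Nash equilibrium, proving Theorem \ref{thm-st}.
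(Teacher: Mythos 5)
Your proposal is correct and follows essentially the same route as the paper: both reduce the statement to Rosen's existence theorem (Theorem \ref{theorem-rosen-1}) and verify convexity/compactness of $\mc{A}_k^{\mathrm{SIC}}$, joint continuity, and concavity in $\mb{Q}_k$, using the same decoupling of $u_k^{\mathrm{SIC}}$ across the permutation index $\pi$. The only differences are cosmetic and in your favor: you certify concavity via the standard $\log\det$-of-an-affine-map argument where the paper computes the second derivative along a line segment, and you are more careful about the integrable envelope for the expectation and about boundedness of the coordinates with $p_\pi=0$, which the paper's Appendix \ref{appendix_1} glosses over.
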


\begin{proof}
It is quite easy to prove that the strategy sets
$\mc{A}_k^{\mathrm{SIC}}$ are convex and compact sets and that the
utility functions $u_k^{\mathrm{SIC}}(\mb{Q}_k,\mb{Q}_{-k})$ are
concave w.r.t. $\mb{Q}_k$ and continuous w.r.t. to
$(\mb{Q}_k,\mb{Q}_{-k})$ and by Theorem \ref{theorem-rosen-1} at
least one Nash equilibrium exists. For more details, the reader is
referred to Appendix \ref{appendix_1}.
\end{proof}

\begin{theorem} \emph{[Sufficient condition for uniqueness] If the
following condition is met}
\begin{equation}
\label{cond_ext_rosen_sic} \sum_{\pi \in \mc{P}_K} \sum_{k=1}^K
\mathrm{Tr} \left\{ (\mb{Q}_k^{(\pi)''}-\mb{Q}_k^{(\pi)'})
\left(\nabla_{Q_k^{(\pi)}}
u_k^{\mathrm{SIC}}(\mb{Q}_k',\mb{Q}_{-k}') -\nabla_{Q_k^{(\pi)}}
u_k^{\mathrm{SIC}} (\mb{Q}_k'',\mb{Q}_{-k}'') \right) \right\}
> 0
\end{equation}
\emph{for all $\mb{Q}_k' = \left(\mb{Q}_k^{(\pi)'} \right)_{\pi \in \mc{P}_K }, \mb{Q}_k'' = \left(\mb{Q}_k^{(\pi)''}\right)_{\pi \in \mc{P}_K }
\in \mc{A}_k^{\mathrm{SIC}}$ such that $(\mb{Q}_1',\hdots,\mb{Q}_K') \neq (\mb{Q}_1'',\hdots,\mb{Q}_K'')$, then the Nash
equilibrium in the power allocation game of Theorem \ref{thm-st} is unique.} \label{ext_rosen_sic}
\end{theorem}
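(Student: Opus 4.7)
The plan is to adapt Rosen's uniqueness argument (Theorem \ref{theorem-rosen-2}) from the Euclidean setting to the Hermitian matrix setting by systematically replacing the scalar inner product with the trace inner product $\langle \mb{A},\mb{B}\rangle = \mathrm{Tr}(\mb{A}\mb{B})$ on Hermitian matrices. Each strategy $\mb{Q}_k = (\mb{Q}_k^{(\pi)})_{\pi \in \mc{P}_K}$ is then naturally viewed as a point in the real vector space of $K!$-tuples of $n_t \times n_t$ Hermitian matrices, and hypothesis (\ref{cond_ext_rosen_sic}) becomes the exact matrix analog of the diagonally strict concavity condition used by Rosen.

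I proceed by contradiction: suppose the game admits two distinct equilibria $\mb{Q}' = (\mb{Q}_1',\ldots,\mb{Q}_K')$ and $\mb{Q}'' = (\mb{Q}_1'',\ldots,\mb{Q}_K'')$. For each player $k$, since $\mb{Q}_k'$ maximizes the concave function $\mb{Q}_k \mapsto u_k^{\mathrm{SIC}}(\mb{Q}_k, \mb{Q}_{-k}')$ over the convex set $\mc{A}_k^{\mathrm{SIC}}$ (both properties were verified in the proof of Theorem \ref{thm-st}), the standard first-order optimality condition yields the variational inequality
\begin{equation*}
\sum_{\pi \in \mc{P}_K}\mathrm{Tr}\!\left[(\mb{Q}_k^{(\pi)}-\mb{Q}_k^{(\pi)'})\nabla_{Q_k^{(\pi)}}u_k^{\mathrm{SIC}}(\mb{Q}_k',\mb{Q}_{-k}')\right]\leq 0
\end{equation*}
for every $\mb{Q}_k \in \mc{A}_k^{\mathrm{SIC}}$. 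Instantiating this with $\mb{Q}_k = \mb{Q}_k''$, writing the symmetric inequality at the equilibrium $\mb{Q}''$ with test strategy $\mb{Q}_k'$, and subtracting the second from the first produces, for each $k$,
\begin{equation*}
\sum_{\pi \in \mc{P}_K}\mathrm{Tr}\!\left[(\mb{Q}_k^{(\pi)''}-\mb{Q}_k^{(\pi)'})\left(\nabla_{Q_k^{(\pi)}}u_k^{\mathrm{SIC}}(\mb{Q}_k',\mb{Q}_{-k}')-\nabla_{Q_k^{(\pi)}}u_k^{\mathrm{SIC}}(\mb{Q}_k'',\mb{Q}_{-k}'')\right)\right]\leq 0.
\end{equation*}

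Summing this inequality over $k\in\mc{K}$ yields a non-positive left-hand side which, since $\mb{Q}' \neq \mb{Q}''$, directly contradicts hypothesis (\ref{cond_ext_rosen_sic}) and so forces uniqueness. The only delicate point I foresee is establishing the first-order variational inequality in the matrix setting: one must verify that the gradient $\nabla_{Q_k^{(\pi)}}u_k^{\mathrm{SIC}}$ can be represented as a Hermitian matrix whose trace pairing with feasible perturbations coincides with the Gateaux derivative, and that the averaged trace power constraint (which couples the $\pi$-components) does not disrupt this characterization. It does not, because $\mc{A}_k^{\mathrm{SIC}}$ remains convex and the VI depends only on convexity of the strategy set and concavity of $u_k^{\mathrm{SIC}}$ in $\mb{Q}_k$; all remaining steps are mechanical translations of Rosen's scalar argument to the trace-inner-product setting.
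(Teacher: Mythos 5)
Your argument is correct, and the final displayed inequality per player is exactly what is needed; summing over $k$ and over $\pi$ gives $\mc{C}\le 0$, contradicting (\ref{cond_ext_rosen_sic}). However, you reach it by a genuinely different route than the paper. The paper's Appendix B proof writes out the full Kuhn--Tucker system at each of the two hypothesized equilibria: it introduces scalar multipliers $\widetilde{\lambda}_k,\widehat{\lambda}_k\ge 0$ for the averaged trace constraint and Hermitian positive semidefinite matrices $\widetilde{\Phi}_k^{(\pi)},\widehat{\Phi}_k^{(\pi)}$ for the positivity constraints, uses the stationarity identity $\nabla_{\mb{Q}_k^{(\pi)}}u_k = p_{\pi}\lambda_k\mb{I}-\Phi_k^{(\pi)}$ to expand $\mc{C}$ term by term, and then invokes complementary slackness together with $\mathrm{Tr}(\mb{A}\mb{B})\ge 0$ for positive semidefinite $\mb{A},\mb{B}$ to conclude $\mc{C}\le 0$. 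You instead use only the first-order variational inequality satisfied by a maximizer of a concave differentiable function over a convex set, evaluated at each equilibrium with the other equilibrium as test point, and combine the two inequalities. Your version is shorter, stays entirely at the level of the trace inner product, and sidesteps the question of whether KKT multipliers exist (which formally requires a constraint qualification that the paper does not discuss); the paper's version is the faithful matrix transcription of Rosen's original argument and has the side benefit of exhibiting the multiplier structure that is reused later for the water-filling characterization of the equilibrium. The one point you flag as delicate --- representing the Gateaux derivative of $u_k^{\mathrm{SIC}}$ by a Hermitian matrix paired via the trace --- is indeed the only thing to check, and it holds here because each $R_k^{(\pi)}$ is a smooth log-det functional of $\mb{Q}_k^{(\pi)}$ alone and the coupling across $\pi$ enters only through the convex constraint set, not through the utility.
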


This theorem corresponds to the matrix generalization of the diagonally strict concavity (DSC) condition of [17] and is proven in Appendix B. To
know whether this condition is verified or not in the MIMO MAC one needs to re-write it in a more exploitable manner. It can be checked that
$\mc{C}$ expresses as $\mc{C}= \ds{\sum_{\pi \in \mc{P}_K}} p_{\pi} \mc{T}_{\pi}$ where for each $\pi \in \mc{P}_K$, $\mc{T}_{\pi}$ is given by:
\begin{equation}
\begin{array}{lcl}
\mc{T}_{\pi} & = & \ds{\sum_{k=1}^K} \mathrm{Tr} \left\{ \left(\mb{Q}_k^{(\pi)''}-\mb{Q}_k^{(\pi)'} \right)
\left[\nabla_{Q_k^{(\pi)}}R_k^{(\pi)}(\mb{Q}_k^{(\pi)'},\mb{Q}_{-k}^{(\pi)'}) -\nabla_{Q_k^{(\pi)}}R_k^{(\pi)}
(\mb{Q}_k^{(\pi)''},\mb{Q}_{-k}^{(\pi)''}) \right] \right\} \\
& = & \ds{ \mathbb{E} \sum_{r=1}^K} \mathrm{Tr} \left\{ \rho \mb{H}_{\pi^{-1}(r)} (\mb{Q}_{\pi^{-1}(r)}^{(\pi)''}- \mb{Q}_{\pi^{-1}(r)}^{(\pi)'}
)
\mb{H}_{\pi^{-1}(r)}^H \right. \\
&& \left. \left[\left(\mb{I} + \rho \mb{H}_{\pi^{-1}(r)}\mb{Q}_{\pi^{-1}(r)}^{(\pi)'} \mb{H}_{\pi^{-1}(r)}^H + \rho \ds{\sum_{s=r+1}^K}
\mb{H}_{\pi^{-1}(s)}\mb{Q}_{\pi^{-1}(s)}^{(\pi)'} \mb{H}_{\pi^{-1}(s)} ^H \right)^{-1} - \right. \right. \\
& & \left. \left. \left(\mb{I} + \rho \mb{H}_{\pi^{-1}(r)}\mb{Q}_{\pi^{-1}(r)}^{(\pi)''} \mb{H}_{\pi^{-1}(r)} ^H + \rho \ds{\sum_{s=r+1}^K}
\mb{H}_{\pi^{-1}(s)}\mb{Q}_{\pi^{-1}(s)}^{(\pi)''} \mb{H}_{\pi^{-1}(s)} ^H\right)^{-1}\right] \right\} \\
& =& \ds{\mathbb{E} \sum_{r=1}^K} \mathrm{Tr} \left(\mb{A}_r^{(\pi)''}- \mb{A}_r^{(\pi)'}\right)\left[\left(\mb{I}+ \sum_{s=r}^K
\mb{A}_s^{(\pi)'}\right)^{-1}-\left(\mb{I}+ \sum_{s=r}^K
\mb{A}_s^{(\pi)''}\right)^{-1}\right] \\
& \triangleq & \mathbb{E}\left[F_{\pi}(\mb{H})\right]
\end{array}
\end{equation}
where $\mb{A}_r^{(\pi)'}=\rho \mb{H}_{\pi^{-1}(r)}\mb{Q}_{\pi^{-1}(r)}^{(\pi)'} \mb{H}_{\pi^{-1}(r)}^H$, $\mb{A}_r^{(\pi)''}=\rho
\mb{H}_{\pi^{-1}(r)}\mb{Q}_{\pi^{-1}(r)}^{(\pi)''} \mb{H}_{\pi^{-1}(r)}^H$ and the users have been ordered using their decoding rank rather than
their index. Notice that since the expectation operator is linear we can switch between the trace and expectation.


Let us denote by $\mb{H} = \left[\mb{H}_1,\hdots,\mb{H}_K\right]$, $\mb{Q}' = \left(\mb{Q}_k'\right)_{k \in \mc{K}}$, , $\mb{Q}'' =
\left(\mb{Q}_k''\right)_{k \in \mc{K}}$, $\mb{Q}^{(\pi)'} = \left(\mb{Q}_k^{(\pi)'}\right)_{k \in \mc{K}}$, , $\mb{Q}^{(\pi)''} =
\left(\mb{Q}_k^{(\pi)''}\right)_{k \in \mc{K}}$, $\mb{A}' = \left(\mb{A}_k^{(\pi)'}\right)_{\pi \in \mc{P}_K, k \in \mc{K}}$,  $\mb{A}''=
\left(\mb{A}_k^{(\pi)''}\right)_{\pi \in \mc{P}_K, k \in \mc{K}}$.

In order to prove that the DSC condition holds we have to prove that for all $\mb{Q}' \neq \mb{Q}''$ we have $\mc{C} > 0$.


Let us give a very useful result.

\begin{lemma}\emph{ For any positive definite matrices $\mb{A}_1$, $\mb{B}_1$, and any positive semi-definite matrices $\mb{A}_i$,
$\mb{B}_i$, $i \in \{2,\hdots,K\}$, we have that}
\begin{equation}
\label{eq:trace-ineq} \sum_{i=1}^{K} \mathrm{Tr} \left\{ \left(\mb{A}_i - \mb{B}_i \right) \left[ \left( \sum_{j=1}^{i} \mb{B}_j\right)^{-1}   -
\left( \sum_{j=1}^{i} \mb{A}_j\right)^{-1} \right] \right\} \geq 0
\end{equation}
where the equality holds if and only if $\mb{A}_j = \mb{B}_j$ for all $j \in \{1, \hdots, K\}$
\end{lemma}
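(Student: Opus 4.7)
The plan is to establish the inequality by reducing it, via an integral representation of the difference of matrix inverses, to a strictly positive quadratic form. Denote the partial sums $\mb{S}_i = \sum_{j=1}^i\mb{A}_j$, $\mb{T}_i = \sum_{j=1}^i\mb{B}_j$, their difference $\mb{D}_i = \mb{S}_i - \mb{T}_i$, and $\bs{\Delta}_i = \mb{A}_i - \mb{B}_i$, so that $\mb{D}_i = \sum_{j=1}^i\bs{\Delta}_j$. I would first introduce the linear interpolation $\bs{\Omega}_i(\alpha) = (1-\alpha)\mb{T}_i + \alpha\mb{S}_i$ for $\alpha\in[0,1]$; this is positive definite because $\bs{\Omega}_1(\alpha) = (1-\alpha)\mb{B}_1 + \alpha\mb{A}_1$ is a convex combination of two PD matrices and $\bs{\Omega}_i(\alpha)-\bs{\Omega}_1(\alpha)\succeq \mb{0}$. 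The standard identity $\tfrac{d}{d\alpha}\bs{\Omega}_i(\alpha)^{-1} = -\bs{\Omega}_i(\alpha)^{-1}\mb{D}_i\bs{\Omega}_i(\alpha)^{-1}$ then yields, with $\mb{M}_i(\alpha):=\bs{\Omega}_i(\alpha)^{-1}$,
\begin{equation*}
\mb{T}_i^{-1} - \mb{S}_i^{-1} = \int_0^1 \mb{M}_i(\alpha)\,\mb{D}_i\,\mb{M}_i(\alpha)\, d\alpha.
\end{equation*}
Substituting into the left-hand side of the lemma and interchanging sum with integral, the problem reduces to showing $\mc{T}(\alpha)\geq 0$ for every $\alpha$, where $\mc{T}(\alpha):=\sum_{i=1}^K \mathrm{Tr}[\bs{\Delta}_i\mb{M}_i\mb{D}_i\mb{M}_i]$.

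Using the Abel-type substitution $\bs{\Delta}_i = \mb{D}_i - \mb{D}_{i-1}$ (with $\mb{D}_0=\mb{0}$), I would split $\mc{T}(\alpha)$ into a diagonal part $\sum_i \mathrm{Tr}[\mb{D}_i\mb{M}_i\mb{D}_i\mb{M}_i] = \sum_i a_i^2$, where $a_i := \|\mb{M}_i^{1/2}\mb{D}_i\mb{M}_i^{1/2}\|_F$, and cross terms $-\sum_{i\geq 2}\mathrm{Tr}[\mb{D}_{i-1}\mb{M}_i\mb{D}_i\mb{M}_i]$. Cauchy--Schwarz in the Frobenius inner product, applied to the Hermitian matrices $\mb{M}_i^{1/2}\mb{D}_{i-1}\mb{M}_i^{1/2}$ and $\mb{M}_i^{1/2}\mb{D}_i\mb{M}_i^{1/2}$, bounds each cross term by $b_i a_i$ with $b_i := \|\mb{M}_i^{1/2}\mb{D}_{i-1}\mb{M}_i^{1/2}\|_F$. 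The crucial monotonicity step is $b_i \leq a_{i-1}$: since $\bs{\Omega}_i(\alpha)-\bs{\Omega}_{i-1}(\alpha) = (1-\alpha)\mb{B}_i+\alpha\mb{A}_i \succeq \mb{0}$, we have $\mb{M}_i \preceq \mb{M}_{i-1}$, and writing $\mb{M}_{i-1} = \mb{M}_i + \mb{P}$ with $\mb{P}\succeq\mb{0}$ and expanding $a_{i-1}^2$ produces only traces of products of PSD matrices (using $\mb{D}_{i-1}\mb{P}\mb{D}_{i-1} = (\mb{P}^{1/2}\mb{D}_{i-1})^H(\mb{P}^{1/2}\mb{D}_{i-1})\succeq \mb{0}$), yielding $a_{i-1}^2\geq b_i^2$.

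Combining these bounds,
\begin{equation*}
\mc{T}(\alpha) \geq \sum_{i=1}^K a_i^2 - \sum_{i=2}^K a_{i-1}a_i = \ul{a}^T\mb{Q}\,\ul{a},
\end{equation*}
where $\mb{Q}$ is the $K\times K$ tridiagonal Toeplitz matrix with $1$'s on the diagonal and $-\tfrac12$'s on the first off-diagonals. Its eigenvalues are $1-\cos(k\pi/(K+1))$, $k=1,\ldots,K$, all strictly positive, so $\mb{Q}\succ 0$ and $\mc{T}(\alpha)\geq 0$. Integrating over $\alpha\in[0,1]$ establishes the inequality. For the equality clause, $\mc{T}(\alpha)\equiv 0$ together with $\mb{Q}\succ 0$ forces $a_i = 0$ for every $i$; since each $\mb{M}_i$ is PD this gives $\mb{D}_i=\mb{0}$ for all $i$, and telescoping $\bs{\Delta}_i=\mb{D}_i-\mb{D}_{i-1}$ yields $\mb{A}_j=\mb{B}_j$ for all $j$.

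The hardest step I expect is matching the Cauchy--Schwarz bound with the right monotonicity estimate: the cross terms come weighted by $\mb{M}_i$ rather than $\mb{M}_{i-1}$, and one must exploit the nested structure of the partial sums (which gives $\mb{M}_i \preceq \mb{M}_{i-1}$) to transfer the cross contributions onto the diagonal $a_{i-1}^2$ terms. Once this pairing is in place, the explicit positive-definiteness of the tridiagonal matrix $\mb{Q}$ closes the argument at once.
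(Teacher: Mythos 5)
Your argument is correct, and it is necessarily a different route from the paper's, because the paper does not prove the lemma at all: it defers to \cite{belmega-jipam-2009} for $K=2$ and \cite{belmega-ajmaa-2010} for general $K$, where the inequality is established by purely algebraic means (the resolvent identity $\mb{T}^{-1}-\mb{S}^{-1}=\mb{T}^{-1}(\mb{S}-\mb{T})\mb{S}^{-1}$ together with nonnegativity of traces of products of positive semi-definite matrices, extended to general $K$ by induction on the nested partial sums). Your proof replaces that induction by a single global argument: the integral representation $\mb{T}_i^{-1}-\mb{S}_i^{-1}=\int_0^1 \mb{M}_i\mb{D}_i\mb{M}_i\,d\alpha$ reduces everything to the pointwise positivity of $\mc{T}(\alpha)$, and the chain Cauchy--Schwarz bound $|\mathrm{Tr}[\mb{D}_{i-1}\mb{M}_i\mb{D}_i\mb{M}_i]|\leq b_i a_i$ plus the monotonicity $b_i\leq a_{i-1}$ (which correctly exploits $\mb{M}_i\preceq\mb{M}_{i-1}$, itself a consequence of the nesting $\bs{\Omega}_i\succeq\bs{\Omega}_{i-1}\succ 0$) lands on the quadratic form $\ul{a}^T\mb{Q}\ul{a}$ with $\mb{Q}$ the tridiagonal Toeplitz matrix, whose strict positive definiteness (equivalently, the identity $\sum_i a_i^2-\sum_{i\geq 2}a_{i-1}a_i=\tfrac12 a_1^2+\tfrac12 a_K^2+\tfrac12\sum_{i\geq 2}(a_i-a_{i-1})^2$) closes both the inequality and the equality case at once. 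Each step checks out: $\bs{\Omega}_1(\alpha)$ is positive definite as a convex combination of positive definite matrices, so all $\mb{M}_i(\alpha)$ exist; the expansion $a_{i-1}^2=b_i^2+2\,\mathrm{Tr}[\mb{D}_{i-1}\mb{M}_i\mb{D}_{i-1}\mb{P}]+\mathrm{Tr}[\mb{D}_{i-1}\mb{P}\mb{D}_{i-1}\mb{P}]$ indeed has nonnegative correction terms; and $a_i=0$ with $\mb{M}_i$ invertible forces $\mb{D}_i=0$, whence $\mb{A}_j=\mb{B}_j$ by telescoping. What your approach buys is a self-contained, induction-free treatment valid for all $K$ simultaneously, with the equality characterization falling out of the strict positivity of $\mb{Q}$ rather than requiring a separate case analysis; the cost is the extra machinery of the integral representation, which the elementary algebraic proof of the cited references avoids.
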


The proof can be found in \cite{belmega-jipam-2009}, for $K=2$, and in \cite{belmega-ajmaa-2010} for arbitrary $K \geq 2$. Using this result, we
can prove that for any channel realization, any $\mb{Q}',\mb{Q}''$ and any $\pi \in \mc{P}_K$:

\begin{equation}
F_{\pi} (\mb{H}) = \mathrm{Tr} \left(\mb{A}_r^{(\pi)''}- \mb{A}_r^{(\pi)'}\right)\left[\left(\mb{I}+ \sum_{s=r}^K
\mb{A}_s^{(\pi)'}\right)^{-1}-\left(\mb{I}+ \sum_{s=r}^K \mb{A}_s^{(\pi)''}\right)^{-1}\right] \geq 0
\end{equation}
implying that $\mc{T}_{\pi} \geq 0$ and that $\mc{C} \geq 0$. Let us consider now two arbitrary covariance matrices such that $\mb{Q}' \neq
\mb{Q}''$. This means that there is at least one decoding order $\vartheta \in \mc{P}_K$ such that $\mb{Q}^{(\vartheta)'} \neq
\mb{Q}^{(\vartheta)''}$. We will prove that $\mc{T}_{\vartheta}> 0$ which will imply the desired result $\mc{C} > 0$.

\textbf{Remark:} Assuming that $\mathrm{rank} (\mb{H}_k^H\mb{H}_k) = n_t$, for all $k \in \mc{K}$, and $n_t \leq n_r + 1$, then $\mb{Q}' \neq
\mb{Q}''$ implies that $\mb{A}' \neq \mb{A}''$. This means that for any channel realization we have $F_{\vartheta} (\mb{H}) >0$ which implies
directly $\mc{T}_{\vartheta}> 0$ and $\mc{C}>0$.

For the general proof, let us define the following sets:
\begin{equation}
\begin{array}{lcl}
\mc{A}_H (\mb{Q}^{(\vartheta)'}, \mb{Q}^{(\vartheta)''}) & = & \left\{\mb{H} \in \mc{D}_H | \forall k \in \mc{K}: \mb{H}_k ({\mb{Q}_k}^{(\vartheta)'}-{\mb{Q}_k}^{(\vartheta)''}) \mb{H}_k^H = 0\right\} \\
\tilde{\mc{A}}_H (\mb{Q}^{(\vartheta)'}, \mb{Q}^{(\vartheta)''}) & = & \left\{\mb{H} \in \mc{D}_H | \exists k \in \mc{K}: \mb{H}_{k}
(\mb{Q}_k^{(\vartheta)'}-\mb{Q}_k^{(\vartheta)''}) \mb{H}_k^H \neq 0\right\}
\end{array}
\end{equation}
We know that:
\begin{equation}
\begin{array}{lcl}
\mc{T}_{\vartheta}& =& \ds{\mathbb{E} [F_{\vartheta} (\mb{H})]} \\
& = & \ds{\int_{\mc{D}_H} } F_{\vartheta} (\mb{H}) L(\mb{H}) \mathrm{d}\mb{H} \\
& = & \ds{\int_{\mc{A}_H (\mb{Q}^{(\vartheta)'}, \mb{Q}^{(\vartheta)''})}}  F_{\vartheta} (\mb{H}) L(\mb{H}) \mathrm{d}\mb{H}  + \ds{\int_{\tilde{\mc{A}}_H (\mb{Q}^{(\vartheta)'}, \mb{Q}^{(\vartheta)''})}  } F_{\vartheta} (\mb{H}) L(\mb{H}) \mathrm{d}\mb{H}  \\
& =  & \ds{\int_{\tilde{\mc{A}}_H (\mb{Q}^{(\vartheta)'}, \mb{Q}^{(\vartheta)''})} } F_{\vartheta} (\mb{H}) L(\mb{H}) \mathrm{d}\mb{H}
\end{array}
\end{equation}
where $L(\mb{H}) > 0$ stands for the p.d.f. of $\mb{H} \in \mc{D}_H \equiv \mathbb{C}^{n_r \times Kn_t}$. The second equality follows since
$\mc{D}_H =\mc{A}_H (\mb{Q}^{(\vartheta)'}, \mb{Q}^{(\vartheta)''}) \cup \tilde{\mc{A}}_H (\mb{Q}^{(\vartheta)'}, \mb{Q}^{(\vartheta)''})$. The
third equality follows since for all $\mb{H} \in \mc{A}_H (\mb{Q}^{(\vartheta)'}, \mb{Q}^{(\vartheta)''})$ we have that $F_{\vartheta} (\mb{H})
= 0$ from Lemma \ref{eq:trace-ineq}. We know that for all $\mb{H} \in \tilde{\mc{A}}_H (\mb{Q}^{(\vartheta)'}, \mb{Q}^{(\vartheta)''})$ we have
that $F_{\vartheta} (\mb{H}) > 0$. It suffices to prove that $\tilde{\mc{A}}_H (\mb{Q}^{(\vartheta)'}, \mb{Q}^{(\vartheta)''})$ is a subset of
non-zero Lebesgue measure to imply that $\mc{T}_{\vartheta}> 0$ and thus that $\mc{C}> 0$. It turns out that we can prove the existence of a
compact set $\mc{U}_H \subseteq \tilde{\mc{A}}_H(\mb{Q}^{(\vartheta)'}, \mb{Q}^{(\vartheta)''})$ for arbitrary $\mb{Q}^{(\vartheta)'} \neq
\mb{Q}^{(\vartheta)''}$.
 Thus, we have the desired result $\mc{C}>0$.

\emph{Determination of the Nash equilibrium.} In order to find the
optimal covariance matrices, we proceed in the same way as described
in \cite{lasaulce-gamecomm-2007}. First we will focus on the optimal
eigenvectors and then we will determine the optimal eigenvalues by
approximating the utility functions under the large system
assumption.
\begin{theorem}\emph{[Optimal eigenvectors] For all $k\in\mc{K}$, $\mb{Q}_k\in \mc{A}_k^{\mathrm{SIC}}$ there
is no loss of optimality by imposing the structure $\mb{Q}_k=( {\mb{Q}_k}^{(\pi)})_{\pi \in \mc{P}_K}$, $\mb{Q}_k^{(\pi)}=\mb{W}_k
{\mb{P}_k}^{(\pi)} {\mb{W}_k}^H$, in the sense that:
$$ \ds{
\max_{\mb{Q}_k\in\mc{A}_k^{\mathrm{SIC}}}
u_k^{\mathrm{SIC}}(\mb{Q}_k,\mb{Q}_{-k}) =
\max_{\mb{Q}_k\in\mc{S}_k^{\mathrm{SIC}}}
u_k^{\mathrm{SIC}}(\mb{Q}_k,\mb{Q}_{-k})},
$$
where $\mc{S}_k^{\mathrm{SIC}}= \left\{\mb{Q}_k= ({\mb{Q}_k}^{(\pi)})_{\pi \in \mc{P}_k} \in \mc{A}_k^{\mathrm{SIC}} | \mb{Q}_k^{(\pi)} =
\mb{W}_k \mb{P}_k^{(\pi)} \mb{W}_k^H  \right\}$, $s\in \mc{S}$, model from (\ref{eq:uiu-model}) and $\mb{P}_k^{(s)}=
\mathrm{Diag}(P_k^{(\pi)}(1),\hdots,P_k^{(\pi)}(n_t))$.}
\end{theorem}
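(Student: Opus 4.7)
The plan is to reduce the optimization to a canonical MIMO problem via the UIU transformation, and then force the reduced covariance to be diagonal by a phase-averaging argument combined with concavity of the utility. First I would substitute $\mb{H}_\ell=\mb{V}\tilde{\mb{H}}_\ell\mb{W}_\ell$ (using the assumption $\mb{V}_k=\mb{V}$ common to all users) into~(\ref{eq:mimo-st-rates}). With the change of variable $\tilde{\mb{Q}}_\ell^{(\pi)}\triangleq\mb{W}_\ell\mb{Q}_\ell^{(\pi)}\mb{W}_\ell^H$ one obtains $\mb{H}_\ell\mb{Q}_\ell^{(\pi)}\mb{H}_\ell^H=\mb{V}\tilde{\mb{H}}_\ell\tilde{\mb{Q}}_\ell^{(\pi)}\tilde{\mb{H}}_\ell^H\mb{V}^H$ for every $\ell$. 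Factoring the common $\mb{V}$ out of each determinant via $|\mb{V}\mb{M}\mb{V}^H|=|\mb{M}|$ lets me rewrite $u_k^{\mathrm{SIC}}$ purely in terms of $\{\tilde{\mb{Q}}_\ell^{(\pi)}\}$ and $\{\tilde{\mb{H}}_\ell\}$. Because $\mb{W}_k$ is unitary, $\mb{Q}_k^{(\pi)}\mapsto\tilde{\mb{Q}}_k^{(\pi)}$ is a bijection on the PSD cone that preserves traces and therefore preserves the constraint set $\mc{A}_k^{\mathrm{SIC}}$. It thus suffices to prove that the optimal $\tilde{\mb{Q}}_k^{(\pi)}$ may be taken diagonal in the canonical basis; reversing the change of variable then gives the stated eigenstructure for $\mb{Q}_k^{(\pi)}$.

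Second, I would exploit the circular symmetry of $\tilde{\mb{H}}_k$. Let $\mb{\Phi}=\mathrm{diag}(e^{i\theta_1},\dots,e^{i\theta_{n_t}})$ with $\theta_j$ i.i.d.\ uniform on $[0,2\pi)$ and independent of every channel matrix. Multiplying column $j$ of $\tilde{\mb{H}}_k$ by $e^{i\theta_j}$ preserves its distribution (each entry is a circularly symmetric complex Gaussian) and the columns remain independent, so $\tilde{\mb{H}}_k\mb{\Phi}\stackrel{d}{=}\tilde{\mb{H}}_k$; since $\tilde{\mb{H}}_k$ is independent of $(\tilde{\mb{H}}_\ell)_{\ell\neq k}$, this equality in law holds jointly with the remaining channels. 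Consequently, for every fixed $\mb{\Phi}$,
\begin{equation*}
u_k^{\mathrm{SIC}}\bigl(\mb{\Phi}\tilde{\mb{Q}}_k^{(\pi)}\mb{\Phi}^H,\tilde{\mb{Q}}_{-k}\bigr)=u_k^{\mathrm{SIC}}\bigl(\tilde{\mb{Q}}_k^{(\pi)},\tilde{\mb{Q}}_{-k}\bigr),
\end{equation*}
because inside the expectation $\tilde{\mb{H}}_k\mb{\Phi}\tilde{\mb{Q}}_k^{(\pi)}\mb{\Phi}^H\tilde{\mb{H}}_k^H=(\tilde{\mb{H}}_k\mb{\Phi})\tilde{\mb{Q}}_k^{(\pi)}(\tilde{\mb{H}}_k\mb{\Phi})^H$ and the substitution $\tilde{\mb{H}}_k\mb{\Phi}\to\tilde{\mb{H}}_k$ is distribution-preserving.

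Third, I would average the identity above over $\mb{\Phi}$ and invoke Jensen's inequality. Only the first $\log\det$ in~(\ref{eq:mimo-st-rates}) depends on $\mb{Q}_k^{(\pi)}$ (user $k$ is excluded from the interference sum indexing the second term), and $\log\det$ is concave in its PSD argument, so $\mb{Q}\mapsto u_k^{\mathrm{SIC}}(\mb{Q},\tilde{\mb{Q}}_{-k})$ is concave on the PSD cone. Jensen's inequality then yields
\begin{equation*}
u_k^{\mathrm{SIC}}(\tilde{\mb{Q}}_k^{(\pi)},\tilde{\mb{Q}}_{-k})=\mathbb{E}_{\mb{\Phi}}\bigl[u_k^{\mathrm{SIC}}(\mb{\Phi}\tilde{\mb{Q}}_k^{(\pi)}\mb{\Phi}^H,\tilde{\mb{Q}}_{-k})\bigr]\leq u_k^{\mathrm{SIC}}\bigl(\mathbb{E}_{\mb{\Phi}}[\mb{\Phi}\tilde{\mb{Q}}_k^{(\pi)}\mb{\Phi}^H],\tilde{\mb{Q}}_{-k}\bigr).
\end{equation*}
Using $\mathbb{E}[e^{i(\theta_i-\theta_j)}]=\delta_{ij}$ a direct computation shows that $\mathbb{E}_{\mb{\Phi}}[\mb{\Phi}\mb{M}\mb{\Phi}^H]$ is the diagonal part of $\mb{M}$; hence the right-hand side is $u_k^{\mathrm{SIC}}$ evaluated at a diagonal, PSD matrix with the same trace as $\tilde{\mb{Q}}_k^{(\pi)}$, still lying in $\mc{A}_k^{\mathrm{SIC}}$. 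Thus replacing $\tilde{\mb{Q}}_k^{(\pi)}$ by its diagonal part does not decrease the payoff, and reverting the UIU reduction gives the announced structure for $\mb{Q}_k^{(\pi)}$ (up to the convention used for $\mb{W}_k$ in the UIU factorization).

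The main obstacle is step one: the common-$\mb{V}$ assumption is essential, as otherwise the outer unitaries $\mb{V}_k$ do not cancel inside the log-determinant of the \emph{sum} over $\ell$, and the utility could not be written as a function of the $\tilde{\mb{Q}}_\ell^{(\pi)}$ alone. Step two then requires nothing beyond the circular symmetry and independence of the entries of $\tilde{\mb{H}}_k$; notably, no restriction on the variance profile $\sigma_k(i,j)$ is needed, which is precisely what allows the conclusion to extend from the Kronecker special case to the full UIU model.
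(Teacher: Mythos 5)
Your proposal is correct, and its first half is exactly the paper's argument: substitute the UIU factorization $\mb{H}_\ell=\mb{V}\tilde{\mb{H}}_\ell\mb{W}_\ell$, use the common-$\mb{V}$ assumption to pull the outer unitary out of the log-determinants, observe that only the first $\log\det$ in (\ref{eq:mimo-st-rates}) involves $\mb{Q}_k^{(\pi)}$, and absorb $\mb{W}_k$ into a change of variable $\mb{X}_k^{(\pi)}$ whose optimal value must be shown to be diagonal. Where you diverge is the final step: the paper disposes of it by citing the single-user result of Tulino--Lozano--Verd\'u (\cite{tulino-wc-2006}) as a black box (``annulling the non-diagonal entries of $\mb{X}_k^{(\pi)}$ can only increase the objective''), whereas you prove it from scratch via column phase randomization -- $\tilde{\mb{H}}_k\mb{\Phi}\deq\tilde{\mb{H}}_k$ by circular symmetry and independence of the entries, hence invariance of the utility under $\mb{X}_k^{(\pi)}\mapsto\mb{\Phi}\mb{X}_k^{(\pi)}\mb{\Phi}^H$, then Jensen applied to the concave $\log\det$, and the identity $\mathbb{E}_{\mb{\Phi}}[\mb{\Phi}\mb{M}\mb{\Phi}^H]=\diag(\mb{M})$. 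Your version is self-contained and makes transparent exactly which hypotheses carry the result (independence and circular symmetry of the entries of $\tilde{\mb{H}}_k$, no condition on the variance profile $\sigma_k(i,j)$, trace preservation of the diagonal projection so the constraint set is respected); the paper's version is shorter but leans on the external reference, whose proof is of the same symmetry-plus-concavity flavor. One cosmetic point: your change of variable $\tilde{\mb{Q}}_\ell^{(\pi)}=\mb{W}_\ell\mb{Q}_\ell^{(\pi)}\mb{W}_\ell^H$ is the one actually consistent with the model (\ref{eq:uiu-model}), which yields $\mb{Q}_k^{(\pi)}=\mb{W}_k^H\mb{P}_k^{(\pi)}\mb{W}_k$ rather than the $\mb{W}_k\mb{P}_k^{(\pi)}\mb{W}_k^H$ written in the theorem; the paper's own appendix has the same $\mb{W}_k$ versus $\mb{W}_k^H$ convention mismatch, and you rightly flag it as immaterial.
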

The detailed proof of this result is given in Appendix \ref{appendix_5}. This result, although easy to obtain, it is instrumental in our context for
two reasons. First, the search of the optimum precoding matrices
boils down to the search of the eigenvalues of these matrices.
Second, as the optimum eigenvectors are known, available results in
random matrix theory can be exploited to find an accurate
approximation of these eigenvalues. Indeed, the eigenvalues are not
easy to find in the finite setting. They might be found using numerical techniques based on extensive search. Here, our approach consists in
approximating the utilities in order to obtain expressions which are not only
easier to interpret but also easier to be optimized w.r.t. the
eigenvalues of the precoding matrices. The key idea is to
approximate the different transmission rates by their large-system
equivalent in the regime of large number of antennas. The
corresponding approximates can be found to be accurate even for
relatively small number of antennas (see e.g.,
\cite{biglieri-issta-02}\cite{dumont-globecom-2006} for more
details).

Since we have assumed $\mb{V}_k=\mb{V}$, we can exploit the results
in \cite{tulino-book-04}\cite{tulino-it-2005} for single-user
MIMO channels, assuming the asymptotic regime in terms of the number
of antennas: $n_r \rightarrow \infty$, $n_t \rightarrow \infty$,
$\frac{n_r}{n_t} \rightarrow \beta$. The corresponding approximated
utility for user $k$ is:

\begin{equation}
\tilde{u}_k^{\mathrm{SIC}} (\{\mb{P}_k^{(\pi)}\}_{k \in \mc{K}, \pi
\in \mc{P}_K}) =\ds{\sum_{\pi \in \mc{P}_K}} p_{\pi}
\tilde{R}_k^{(\pi)}(\mb{P}_k^{(\pi)},\mb{P}_{-k}^{(\pi)})
\end{equation}
where
\begin{equation}
\label{eq:mimo-s-approx-sic}
\begin{array}{lcl}
\tilde{R}_k^{(\pi)}(\mb{P}_k^{(\pi)},\mb{P}_{-k}^{(\pi)})&=& \ds{
\frac{1}{n_r}\sum_{\ell \in \mc{K}_{k}^{(\pi)}\cup \{k\}}
\sum_{j=1}^{n_t} \log_2\left(1+(N_k^{(\pi)}+1)\rho
P_{\ell}^{(\pi)}(j)\gamma_{\ell}^{(\pi)}(j)\right) +}\\
& & \ds{ \frac{1}{n_r}\sum_{i=1}^{n_r} \log_2 \left(1+
\frac{1}{(N_k^{(\pi)}+1)n_t}\sum_{\ell \in \mc{K}_k^{(\pi)}\cup
\{k\}} \sum_{j=1}^{n_t}
\sigma_{\ell}(i,j)\delta_{\ell}^{(\pi)}(j) \right) - }\\
& & \ds{ \frac{1}{n_r} \sum_{\ell \in \mc{K}_k^{(\pi)}\cup \{k\} }
\sum_{j=1}^{n_t} \gamma_{\ell}^{(\pi)}(j)\delta_{\ell}^{(\pi)}(j) \log_2e
-}\\
& &\ds{ \frac{1}{n_r} \sum_{\ell \in \mc{K}_k^{(\pi)}}\sum_{j=1}^{n_t} \log_2\left(1+N_k^{(\pi)}\rho  P_{\ell}^{(\pi)}(j) \phi_{\ell}^{(\pi)}(j)\right) -} \\
& & \ds{\frac{1}{n_r} \sum_{i=1}^{n_r} \log_2 \left(1+\frac{1}{N_k^{(\pi)}n_t}\ds{\sum_{\ell\in \mc{K}_k^{(\pi)}}\sum_{j=1}^{n_t}\sigma_{\ell}(i,j)\psi_{\ell}^{(\pi)}(j)}\right)+ }\\
& &  \frac{1}{n_r} \ds{\sum_{\ell \in
\mc{K}_k^{(\pi)}}\sum_{j=1}^{n_r}
\phi_{\ell}^{(\pi)}(j)\psi_{\ell}^{(\pi)}(j)\log_2 e}
\end{array}
\end{equation}
where $N_k^{(\pi)} = |\mc{K}_k^{(\pi)}|$ and the parameters
$\gamma_k^{(\pi)}(j)$ and $\delta_k^{(\pi)}(j)$ $\forall j \in
\{1,\hdots, n_t \}$, $k \in \mc{K}$, $\pi \in \mc{P}_K$ are the
solutions of:
\begin{equation}
\label{alpha_sys_sic} \left\{
\begin{array}{lcl}
&&\forall j \in \{1,\hdots,n_t\}, \ell \in \mc{K}_k^{(\pi)}\cup\{k\}: \\
\gamma_{\ell}^{(\pi)}(j) & = & \ds{\frac{1}{(N_k^{(\pi)}+1)n_t}
\sum_{i=1}^{n_r}
\frac{\sigma_{\ell}(i,j)}{1+\frac{1}{(N_k^{(\pi)}+1)n_t}\ds{\sum_{r \in \mc{K}_k^{(\pi)}\cup \{k\}}\sum_{m=1}^{n_t}}\sigma_r(i,m)\delta_{r}^{(\pi)}(m)}}\\
\ds{\delta_{\ell}^{(\pi)}(j)} &=& \ds{\frac{(N_k^{(\pi)}+1)\rho
P_{\ell}^{(\pi)}(j) }{1+(N_k^{(\pi)}+1)\rho
P_{\ell}^{(\pi)}(j)\gamma_{\ell}^{(\pi)}(j)}},
\end{array}\right.
\end{equation}
and $\phi_{\ell}^{(\pi)}(j)$, $\psi_{\ell}^{(\pi)}(j)$, $\forall j
\in \{1,\hdots, n_t \}$  and $\pi \in \mc{P}_K$ are the unique
solutions of the following system:
\begin{equation}
\label{alpha_sys5} \left\{
\begin{array}{lcl}
&&\forall j \in \{1,\hdots,n_t\}, \ell \in \mc{K}_k^{(\pi)}: \\
\phi_{\ell}^{(\pi)}(j) & = & \ds{\frac{1}{N_k^{(\pi)}
n_t}\sum_{i=1}^{n_r}
\frac{\sigma_{\ell}(i,j)}{1+\frac{1}{N_k^{(\pi)}n_t}\ds{\sum_{r\in \mc{K}_k^{(\pi)}}\sum_{m=1}^{n_t}\sigma_{r}(i,m)\psi_{r}^{(\pi)}(m)}}}\\
\ds{\psi_{\ell}^{(\pi)}(j)} &=& \ds{\frac{N_k^{(\pi)}\rho
P_{\ell}^{\pi)}(j) }{1+N_k^{(\pi)}\rho
P_{\ell}^{(\pi)}(j)\phi_{\ell}^{(\pi)}(j)}}.
\end{array}\right.
\end{equation}

The corresponding water-filling solution is:

\begin{equation}
\label{waterfill} 
 P_k^{(\pi),\mathrm{NE}}(j) = \left[ \frac{1}{ \ln 2
n_r \lambda_k} - \frac{1}{N_k^{(\pi)} \rho \gamma_k^{(\pi)}(j)
}\right]^+,
\end{equation}
where $\lambda_k \geq 0$ is the Lagrangian multiplier tuned in order
to meet the power constraint: $$\ds{ \sum_{\pi \in
\mc{P}_K}\sum_{j=1}^{n_t} p_{\pi}\left[ \frac{1}{ \ln 2 n_r
\lambda_k} - \frac{1}{N_k^{(\pi)} \rho \gamma_k^{(\pi)}(j)
}\right]^+ }= n_t \ol{P}_k.$$ Note that to solve the system of
equations given above, we can use the same iterative power
allocation algorithm as the one described in
\cite{lasaulce-gamecomm-2007}.

At this point, an important point has to be mentioned. The existence
and uniqueness issues have be analyzed in the finite setting (exact
game) whereas the determination of the NE is performed in the
asymptotic regime (approximated game). It turns out that large
system approximates of ergodic transmission rates have the same
properties as their exact counterparts, as shown recently by
\cite{dumont-arxiv-2007}, which therefore ensures the existence and
uniqueness of the NE in the approximated game.

\emph{Nash Equilibrium efficiency.}  In order to measure the
efficiency of the decentralized network w.r.t. its centralized
counterpart we introduce the following quantity:
\begin{equation}
\mathrm{SRE} =
\frac{R_{\mathrm{sum}}^{\mathrm{NE}}}{C_{\mathrm{sum}}} \leq 1,
\end{equation}
where SRE stands for sum-rate efficiency; the quantity
$R_{\mathrm{sum}}^{\mathrm{NE}}$ represents the sum-rate of the
decentralized network at the Nash equilibrium, which is achieved for
certain choices of coding and decoding strategies; the quantity
$C_{\mathrm{sum}}$ corresponds to the sum-capacity of the
centralized network, which is reached only if the optimum coding and
decoding schemes are known. Note that this is the case for the MAC
but not for other channels like the interference channel. Obviously,
the efficiency measure we introduce here is strongly connected to
the price of anarchy \cite{roughgarden-jacm-2002} (POA). The difference
between $\mathrm{SRE}$ and $\mathrm{POA}$ is subtle. In our context,
information theory provides us with fundamental physical limits on
the social welfare (network sum-capacity) while in general no such
upper bound is available. In our case, the sum-capacity is given by:
\begin{equation}
C_{\mathrm{sum}} = \max_{(\bs{\Omega}_1, ..., \bs{\Omega}_K) \in
\mc{A}^{(C)}} \mathbb{E} \log \left| \mb{I} + \rho \sum_{k=1}^{K}
\mb{H}_k \bs{\Omega}_k \mb{H}_k^H \right|,
\end{equation} with
\begin{equation}
 \mc{A}^{(C)} = \left\{
(\bs{\Omega}_1, ..., \bs{\Omega}_K)  | \forall k \in \mc{K},
 \bs{\Omega}_k \succeq 0, \bs{\Omega}_k=\bs{\Omega}_k^H, \mathrm{Tr}(\bs{\Omega}_k) \leq n_t \ol{P}_k
\right\}.
\end{equation}
In general, it is not easy to find a closed-form expression of the
SRE. This is why we will respectively analyze the SRE in the regimes
of high and low signal-to-noise ratio (SNR), and for intermediate
regimes simulations will complete our analysis. It turns out that
the SRE tends to 1 in the two mentioned extreme regimes, which is
the purpose of what follows.

In the \emph{high SNR regime}, where $\rho \rightarrow \infty$, we
observe from (\ref{alpha_sys_sic}) that $\delta_{\ell}^{(\pi)}(j) \rightarrow \frac{1}{\gamma_{\ell}^{(\pi)}(j)}$. Under this condition, it is easy to check that
by setting the derivatives of $\mc{L}_k$ w.r.t. $P_k^{(s)}(j)$ to zero,
we obtain that the power allocation policy at the NE is the uniform
power allocation $\mb{P}_k^{(\pi),\mathrm{NE}} = \ol{P}_k \mb{I}$,
regardless the realization of the coordination signal $S$.
Furthermore, in the high SNR regime, the sum-capacity is achieved by
the uniform power allocation. Thus, we obtain that the gap between the NE
achievable sum-rate and the sum-capacity is optimal,
$\mathrm{SRE}=1$ for any distribution of $S$.

In the \emph{low SNR regime}, where $\rho \rightarrow 0$, from (\ref{alpha_sys_sic}) we obtain
that $\delta_{\ell}^{(\pi)}(j) \rightarrow 0$ and that $\gamma_{\ell}^{(\pi)}(j) = \frac{1}{(N_k^{(\pi)}+1)n_t} \ds{\sum_{i=1}^{n_r}\sigma_{\ell}(i,j)}$. By approximating $\ln(1+x)\approx x$ when $x<<1$, the power allocations policies at the NE are the solutions of the following linear
programs:
\begin{equation}
\begin{array}{ccl}
 \ds{\max_{\{P_k^{(\pi)}(j)\}_{1\leq j\leq n_t}}  \sum_{j=1}^{n_t}
 \left\{\sum_{\pi \in \mc{P}_K} p_{\pi}  P_k^{(\pi)}(j) \sum_{i=1}^{n_r}\sigma_k(i,j)\right\}}  \\
\text{\hspace{25pt} s.t.\hspace{5pt}} \ds{\sum_{j=1}^{n_t} \sum_{\pi
\in \mc{P}_K} P_k^{(\pi)}(j)\leq \ol{P}_k n_t}
\end{array},
\end{equation}
given by:
\begin{equation}
\sum_{\pi \in \mc{P}_K} p_{\pi} P_k^{(\pi), \mathrm{NE}}(j)= \left|
\begin{array}{cllc}
n_t\ol{P}_k & \ \ \text{if } j = \ds{\arg \max_{1\leq m \leq n_t} \sum_{i=1}^{n_r}\sigma_k(i,m)}\\
0 & \ \ \text{otherwise }
\end{array}
\right. .
\end{equation}
The optimal power allocation that achieves the sum-capacity is equal
to the equilibrium power allocation, $\mb{P}_k^{*}= \sum_{\pi\in
\mc{P}_K} p_{\pi} \mb{P}_k^{(\pi), \mathrm{NE}}(j)$ Thus, the achievable
sum-rate at the NE is equal to the centralized upper bound and thus
$\mathrm{SRE}=1$ for any distribution of $S$. In conclusion, when either the low or high SNR regime is assumed, the
sum-capacity of the fast fading MAC is achieved at the NE although a
sub-optimum coordination mechanism is assumed and also regardless of
the distribution of the coordination channel.


\section{Single User Decoding}
\label{sec:SUD}

In this section the coordination signal is deterministic (namely
$\mathrm{Pr}[S=s] = \delta(s)$, $\delta$ being the Kronecker symbol)
and therefore the amount of downlink signalling the BS needs in
order to indicate to the MSs that it is using SUD can be made
arbitrary small (by letting the frequency at which the realizations
of the coordination signal are drawn tend to zero). In this
framework, each user has to optimize only one precoding matrix.
Indeed, the strategy of user $k \in \mc{K}$, consists in choosing
the best precoding matrix $\mb{Q}_k^{(0)} = \mathbb{E}
\left[\ul{X}_k^{(0)} \ul{X}_k^{(0)H} \right]$, in the sense of his utility
function obtained with SUD:
\begin{equation}
\label{eq:utility-sud} u_k^{\mathrm{SUD}}(\mb{Q}_k^{(0)},
\mb{Q}_{-k}^{(0)}) = \mathbb{E} \log \left| \mb{I} + \rho \mb{H}_k
\mb{Q}_k^{(0)} \mb{H}_k^H + \rho\ds{\sum_{\ell \neq k}} \mb{H}
_{\ell} \mb{Q}_{\ell}^{(0)} \mb{H}_{\ell}^H \right|- \mathbb{E} \log
\left|\mb{I} + \rho \ds{\sum_{\ell \neq
k}}\mb{H}_{\ell}\mb{Q}_{\ell}^{(0)}\mb{H}_{\ell}^H \right|
\end{equation}. The strategy set of user $k$ becomes
\begin{equation}
\begin{array}{lcl}  \mc{A}_k^{\mathrm{SUD}} = \left\{
\mb{Q}_k^{(0)} \succeq 0, \mb{Q}_k^{(0)}=\mb{Q}_k^{(0),H}, \mathrm{Tr}(\mb{Q}_k^{(0)}) \leq n_t \ol{P}_k \right\}.
\end{array}
\end{equation}
It turns out that the equilibrium analysis in the game with SUD can
be, to a large extent, deduced from the game with SIC. For this
reason, we will not detail the corresponding proofs. The existence
and uniqueness issues are given in the following theorem.

\begin{theorem}
\label{thm-sud} \emph{[Existence and uniqueness of an NE] The space
power allocation game described by: the set of players $k \in
\mc{K}$; the sets of actions $\mc{A}_k^{\mathrm{SUD}}$ and the
payoff functions
$u_k^{\mathrm{SUD}}(\mb{Q}_k^{(0)},\mb{Q}_{-k}^{(0)})$ given in
(\ref{eq:utility-sud}), has a unique Nash equilibrium.}
\end{theorem}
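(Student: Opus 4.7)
My plan is to mirror, in a simplified setting, the arguments used for the SIC game in Theorems \ref{thm-st} and \ref{ext_rosen_sic}. Since each player now picks a single precoding matrix rather than a vector indexed by $\mc{P}_K$, both parts of the proof become shorter.

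For existence, I would check the three hypotheses of Theorem \ref{theorem-rosen-1}. The strategy set $\mc{A}_k^{\mathrm{SUD}}$ is the intersection of the convex cone of Hermitian positive semi-definite $n_t\times n_t$ matrices with the closed half-space $\{\mb{Q}:\mathrm{Tr}(\mb{Q})\leq n_t\ol{P}_k\}$, hence convex, closed and bounded, and therefore compact. Continuity of $u_k^{\mathrm{SUD}}$ in the joint strategy profile follows by dominated convergence applied to (\ref{eq:utility-sud}), whose integrand is jointly continuous in $(\mb{Q}_1^{(0)},\hdots,\mb{Q}_K^{(0)})$ and uniformly bounded on the compact product strategy set. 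Finally, the first term of (\ref{eq:utility-sud}) is the expectation of $\log\det$ of an affine positive-definite function of $\mb{Q}_k^{(0)}$, hence concave in $\mb{Q}_k^{(0)}$, while the second term does not depend on $\mb{Q}_k^{(0)}$.

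For uniqueness, I would apply the single-matrix-per-player version of the matrix DSC condition of Theorem \ref{ext_rosen_sic}. Introducing $\mb{A}_k^\prime = \rho\mb{H}_k\mb{Q}_k^{(0)\prime}\mb{H}_k^H$, $\mb{A}_k^{\prime\prime} = \rho\mb{H}_k\mb{Q}_k^{(0)\prime\prime}\mb{H}_k^H$, $\mb{A}^\prime = \sum_k\mb{A}_k^\prime$, $\mb{A}^{\prime\prime}=\sum_k\mb{A}_k^{\prime\prime}$ and $\mb{D}=\mb{A}^{\prime\prime}-\mb{A}^\prime$, a direct computation of $\nabla_{\mb{Q}_k^{(0)}}u_k^{\mathrm{SUD}}$ together with cyclicity of the trace and the resolvent identity $(\mb{I}+\mb{A}^\prime)^{-1}-(\mb{I}+\mb{A}^{\prime\prime})^{-1}=(\mb{I}+\mb{A}^\prime)^{-1}\mb{D}(\mb{I}+\mb{A}^{\prime\prime})^{-1}$ collapses the DSC quantity to
\begin{equation}
\mc{C} = \mathbb{E}\,\mathrm{Tr}\Bigl\{\mb{D}(\mb{I}+\mb{A}^\prime)^{-1}\mb{D}(\mb{I}+\mb{A}^{\prime\prime})^{-1}\Bigr\}.
\end{equation}
For every realization of $\mb{H}$ this integrand is of the form $\mathrm{Tr}(\mb{P}_1\mb{P}_2)$, with $\mb{P}_1 = \mb{D}(\mb{I}+\mb{A}^\prime)^{-1}\mb{D}$ positive semi-definite (since $\mb{D}$ is Hermitian and $(\mb{I}+\mb{A}^\prime)^{-1}$ is positive definite) and $\mb{P}_2 = (\mb{I}+\mb{A}^{\prime\prime})^{-1}$ positive definite; hence the integrand is non-negative and vanishes iff $\mb{D}=0$. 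Note that this expression is exactly what Lemma \ref{eq:trace-ineq} produces in the degenerate single-sum case.

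To upgrade this pointwise non-negativity to strict positivity of $\mc{C}$ whenever $\mb{Q}^\prime\neq\mb{Q}^{\prime\prime}$, I would show that $\{\mb{H}:\mb{D}(\mb{H})=0\}$ has zero Lebesgue measure. Pick $k_0$ with $\Delta_{k_0}\triangleq\mb{Q}_{k_0}^{(0)\prime\prime}-\mb{Q}_{k_0}^{(0)\prime}\neq 0$; then, for almost every fixed $\{\mb{H}_k\}_{k\neq k_0}$, the constraint $\mb{D}=0$ imposes a non-trivial polynomial equation on the entries of $\mb{H}_{k_0}$ (non-trivial because the polynomial map $\mb{H}_{k_0}\mapsto\mb{H}_{k_0}\Delta_{k_0}\mb{H}_{k_0}^H$ is not constant when $\Delta_{k_0}\neq 0$), whose zero locus is therefore a proper algebraic subvariety, of Lebesgue measure zero. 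Since under the UIU model (\ref{eq:uiu-model}) with Gaussian $\tilde{\mb{H}}_k$ the joint density of $\mb{H}$ is absolutely continuous and strictly positive, Fubini yields $\mc{C}>0$. The main obstacle is this final measure-theoretic step, but it is a direct transcription of the corresponding argument in the SIC uniqueness proof and is in fact easier here since neither a selection over permutations $\vartheta$ nor a per-rank decomposition is needed.
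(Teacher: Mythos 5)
Your proof is correct and follows essentially the same route as the paper: Rosen's Theorem \ref{theorem-rosen-1} for existence (the paper verifies concavity by a second-derivative computation in Appendix \ref{appendix_1}, whereas you invoke concavity of $\log\det$ composed with an affine map and note the interference term is independent of $\mb{Q}_k^{(0)}$ --- both fine), and the matrix DSC condition of Theorem \ref{ext_rosen_sic} for uniqueness, reducing $\mc{C}$ to the expectation of a pointwise non-negative trace quantity. Your resolvent-identity rewriting $\mc{C}=\mathbb{E}\,\mathrm{Tr}\{\mb{D}(\mb{I}+\mb{A}')^{-1}\mb{D}(\mb{I}+\mb{A}'')^{-1}\}$ makes non-negativity and the equality case ($\mb{D}=0$) immediate without invoking the general Lemma 1, which, as you note, here degenerates to exactly this statement. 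The one place you genuinely diverge is the final measure-theoretic step: the paper splits $\mc{D}_H$ into $\mc{B}_H$ and $\tilde{\mc{B}}_H$ and merely asserts that a compact subset of $\tilde{\mc{B}}_H$ exists, while you show the complement $\{\mb{H}:\mb{D}(\mb{H})=0\}$ is contained in the zero set of a non-trivial polynomial (via the non-constancy of $\mb{H}_{k_0}\mapsto\mb{H}_{k_0}\Delta_{k_0}\mb{H}_{k_0}^H$ and Fubini), hence Lebesgue-null; this is more explicit and arguably more complete than the paper's argument, at the price of requiring the full-support absolute continuity of $L(\mb{H})$ --- an assumption the paper also makes implicitly by taking $L(\mb{H})>0$ on all of $\mc{D}_H$.
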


To prove the \emph{existence} of a Nash equilibrium we also exploit
Theorem \ref{theorem-rosen-1} and the four necessary conditions on
the utility functions and strategy sets can be verified using the
same tools as described in Appendix \ref{appendix_1}.

\emph{Uniqueness of the Nash equilibrium.} Here we can specialize Theorem 4, which is the matrix extension of Theorem 2. When the strategies
sets are not sets of pairs of matrices but only sets of matrices, the diagonally strict concavity condition in (6) can be written as follows.
For all  $\mb{Q}_k^{(0)'}, \mb{Q}_k^{(0)''} \in \mc{A}_k^{\mathrm{SUD}}$ such that $(\mb{Q}_1^{(0)'},\hdots,\mb{Q}_K^{(0)'}) \neq
(\mb{Q}_1^{(0)''},\hdots,\mb{Q}_K^{(0)''})$:
\begin{equation}
\label{cond_ext_rosen_SUD}
\begin{array}{lcl}
\mc{C} & = & \ds{\sum_{k=1}^K} \mathrm{Tr} \left\{ (\mb{Q}_k^{(0)''}-\mb{Q}_k^{(0)'})
\left[\nabla_{Q_k^{(0)}}u_1(\mb{Q}_k^{(0)'},\mb{Q}_k^{(0)'}) -\nabla_{Q_k^{(0)}}u_1(\mb{Q}_k^{(0)''},\mb{Q}_k^{(0)''})\right] \right\}.
\end{array}
\end{equation}
Now we can evaluate $\mc{C}$ and obtain that:
\begin{equation}
\begin{array}{lcl}
\mc{C} &= &\mathbb{E} \ds{\sum_{k=1}^{K}} \mathrm{Tr}\left\{\left[\rho\mb{H}_k(\mb{Q}_k^{(0)'}-\mb{Q}_k^{(0)''}
)\mb{H}_k^H\right]\left[\left(\mb{I}+\rho \ds{\sum_{\ell =1 }^K}
\mb{H} _{\ell} \mb{Q}_{\ell}^{(0)''} \mb{H}_{\ell}^H \right)^{-1} \right. \right. \\
& & \left. \left. - \left(\mb{I}+\rho \ds{\sum_{\ell =1}^K} \mb{H}
_{\ell} \mb{Q}_{\ell}^{(0)'} \mb{H}_{\ell}^H \right)^{-1}\right]\right\} \\
& = & \mathbb{E} \mathrm{Tr}
\{(\mb{B}^{'}-\mb{B}^{''})[(\mb{B}^{''})^{-1}-(\mb{B}^{'})^{-1}]\}, \\
& = & \mathbb{E} [F_0(\mb{H})]
\end{array}
\end{equation}
which is positive for any $\mb{B}^{'} = \mb{I}+\ds{\sum_{\ell =1}^K} \mb{H}_{\ell}\mb{Q}^{(0)'} \mb{H}_{\ell}^H$, $\mb{B}^{''} =
\mb{I}+\ds{\sum_{\ell =1}^K }\mb{H}_{\ell}\mb{Q}^{(0)''} \mb{H}_{\ell}^H$ from (\ref{eq:trace-ineq}) for $K=2$. We need to prove that for any
$(\mb{Q}_1^{(0)'},\hdots,\mb{Q}_K^{(0)'}) \neq (\mb{Q}_1^{(0)''},\hdots,\mb{Q}_K^{(0)''})$ we have $\mc{C}> 0$.

\textbf{Remark:} Assuming that $\mathrm{rank} (\mb{H}^H\mb{H}) = Kn_t$ and $Kn_t \leq n_r + K$, then $\mb{Q}' \neq \mb{Q}''$ implies that
$\mb{B}' \neq \mb{B}''$. This means that for any channel realization we have $F_{0} (\mb{H}) >0$ which implies directly that $\mc{C}>0$.

For the general proof, we define the following sets:
\begin{equation}
\begin{array}{lcl}
\mc{B}_H (\mb{Q}^{(0)'}, \mb{Q}^{(0)''}) & = & \left\{\mb{H} \in \mc{D}_H | \sum_{k=1}^K \mb{H}_k ({\mb{Q}_k}^{(0)'}-{\mb{Q}_k}^{(0)''}) \mb{H}_k^H = 0\right\} \\
\tilde{\mc{B}}_H (\mb{Q}^{(0)'}, \mb{Q}^{(0)''}) & = & \left\{\mb{H} \in \mc{D}_H | \sum_{k=1}^K \mb{H}_k ({\mb{Q}_k}^{(0)'}-{\mb{Q}_k}^{(0)''})
\mb{H}_k^H \neq 0\right\}
\end{array}
\end{equation}

We know that:

\begin{equation}
\begin{array}{lcl}
\mc{C}& =& \ds{\mathbb{E} [F_{0} (\mb{H})]} \\
& = &\ds{ \int_{\mc{D}_H}}  F_0(\mb{H}) L(\mb{H}) \mathrm{d}\mb{H} \\
& = & \ds{\int_{\mc{B}_H (\mb{Q}^{(0)'}, \mb{Q}^{(0)''})}}  F_{0} (\mb{H}) L(\mb{H}) \mathrm{d}\mb{H}  + \ds{\int_{\tilde{\mc{B}}_H (\mb{Q}^{(0)'}, \mb{Q}^{(0)''})} } F_{0} (\mb{H}) L(\mb{H}) \mathrm{d}\mb{H}  \\
& =  & \ds{\int_{\tilde{\mc{B}}_H (\mb{Q}^{(0)'}, \mb{Q}^{(0)''})} } F_{0} (\mb{H}) L(\mb{H}) \mathrm{d}\mb{H}
\end{array}
\end{equation}

The second equality follows since $\mc{D}_H =\mc{B}_H (\mb{Q}^{(0)'}, \mb{Q}^{(0)''}) \cup \tilde{\mc{B}}_H (\mb{Q}^{(0)'}, \mb{Q}^{(0)''})$.
The third equality follows because $F_{0} (\mb{H}) = 0$ for all $\mb{H} \in \mc{B}_H (\mb{Q}^{(0)'}, \mb{Q}^{(0)''})$ from Lemma 1. We also know
that $F_{0} (\mb{H}) > 0$ for all $\mb{H} \in \tilde{\mc{B}}_H (\mb{Q}^{(0)'}, \mb{Q}^{(0)''})$. It suffices to prove that $\tilde{\mc{B}}_H
(\mb{Q}^{(0)'}, \mb{Q}^{(0)''})$ is a subset of non-zero Lebesgue measure to imply that  $\mc{C}> 0$. Here as well, the existence of the compact
set can be proved (similarly to the proof for the SIC decoding technique).

\emph{Determination of the Nash equilibrium.} As for the optimal
eigenvectors of the covariance matrices, we follow the same lines as
in Appendix \ref{appendix_5}. In this case also there is no loss of
optimality by choosing the covariance matrices
$\mb{Q}_k^{(0)}=\mb{W}_k\mb{P}_k^{(0)} \mb{W}_k^H$, where $\mb{W}_k$
is the same unitary matrix as in (\ref{eq:uiu-model}) and $\mb{P}_k$
is the diagonal matrix containing the eigenvalues of
$\mb{Q}_k^{(0)}$.

Here also we further exploit the asymptotic results for the MIMO
channel given in \cite{tulino-book-04} \cite{tulino-it-2005}. The
approximated utility for user $k$ is:
 \begin{equation}
\label{eq:mimo-s-approx-sud}
\begin{array}{lcl}
\tilde{u}_k^{\mathrm{SUD}}(\mb{P}_k^{(0)},\mb{P}_{-k}^{(0)})&=& \ds{
\frac{1}{n_r}\sum_{k=1}^K \sum_{j=1}^{n_t} \log_2(1+K\rho
P_k^{(0)}(j)\gamma_k(j)) +}\\
& & \ds{ \frac{1}{n_r}\sum_{i=1}^{n_r} \log_2 \left(1+
\frac{1}{Kn_t}\sum_{k=1}^K \sum_{j=1}^{n_t}
\sigma_k(i,j)\delta_k(j) \right) - }\\
& & \ds{ \frac{1}{n_r} \sum_{k=1}^K \sum_{j=1}^{n_t}
\gamma_k(j)\delta_k(j) \log_2e -}
\\
& &\ds{ \frac{1}{n_r} \sum_{\ell \neq k}\sum_{j=1}^{n_t} \log_2(1+(K-1)\rho  P_{\ell}^{(0)}(j) \phi_{\ell}(j)) -} \\
& & \ds{\frac{1}{n_r} \sum_{i=1}^{n_r} \log_2 \left(1+\frac{1}{(K-1)n_t}\ds{\sum_{\ell\neq k}\sum_{j=1}^{n_t}\sigma_{\ell}(i,j)\psi_{\ell}(j)}\right)+ }\\
& &  \frac{1}{n_r} \ds{\sum_{\ell \neq k}\sum_{j=1}^{n_r}
\phi_{\ell}(j)\psi_{\ell}(j)\log_2 e}
\end{array}
\end{equation}

where the parameters $\gamma_k(j)$ and $\delta_k(j)$ $\forall j \in
\{1,\hdots, n_t \}$, $k \in \{1,2\}$ are solution of:

\begin{equation}
\label{alpha_sys5} \left\{
\begin{array}{lcl}
&&\forall j \in \{1,\hdots,n_t\}, k \in \mc{K}: \\
\gamma_k(j) & = & \ds{\frac{1}{Kn_t}\sum_{i=1}^{n_r}
\frac{\sigma_k(i,j)}{1+\frac{1}{Kn_t}\ds{\sum_{\ell=1}^K\sum_{m=1}^{n_t}\sigma_{\ell}(i,m)\delta_{\ell}(m)}}}\\
\ds{\delta_k(j)} &=& \ds{\frac{K\rho P_k^{(0)}(j) }{1+K\rho
P_k^{(0)}(j)\gamma_k(j)}}.
\end{array}\right.
\end{equation}

 and $\phi_{\ell}(j)$, $\psi_{\ell}(j)$, $\forall j \in \{1,\hdots, n_t \}$ are the
unique solutions of the following system:

\begin{equation}
\label{alpha_sys5} \left\{
\begin{array}{lcl}
&&\forall j \in \{1,\hdots,n_t\}, \ell \in \mc{K}\setminus\{k\}: \\
\phi_{\ell}(j) & = & \ds{\frac{1}{(K-1)n_t}\sum_{i=1}^{n_r}
\frac{\sigma_{\ell}(i,j)}{1+\frac{1}{(K-1)n_t}\ds{\sum_{r\neq k}\sum_{m=1}^{n_t}\sigma_{r}(i,m)\psi_{r}(m)}}}\\
\ds{\psi_{\ell}(j)} &=& \ds{\frac{(K-1)\rho P_{\ell}^{(0)}(j)
}{1+(K-1)\rho P_{\ell}^{(0)}(j)\phi_{\ell}(j)}}.
\end{array}\right.
\end{equation}


The corresponding water-filling solution is:

\begin{equation}
\label{waterfill} 
 P_k^{(0),\mathrm{NE}}(j) = \left[ \frac{1}{ \ln 2
n_r \lambda_k} - \frac{1}{K \rho \gamma_k(j) }\right]^+,
\end{equation}
where $\lambda_k \geq 0$ is the Lagrangian multiplier tuned in order
to meet the power constraint: $\ds{ \sum_{j=1}^{n_t}\left[ \frac{1}{
\ln 2 n_r \lambda_k} - \frac{1}{K \rho \gamma_k(j) }\right]^+ }= n_t
\ol{P}_k$.

In what the efficiency of
the NE point is concerned, we already know that the SUD decoding technique is
sub-optimal in the centralized case (SUD does allow the network to
operate at an arbitrary point of the centralized MAC capacity
region) and it is impossible to reach the sum-capacity
$C_{\mathrm{sum}}$ even if the high and low SNR regime are assumed.

\section{SIMULATION RESULTS}
\label{sec:simulation-results} 
 In what follows, we assume the regime of large
numbers of antennas. From \cite{lasaulce-gamecomm-2007},
\cite{tulino-book-04},
 \cite{tulino-it-2005}, we know that the approximates of the ergodic achievable rates in the asymptotic regime are accurate even for relatively small
  number of antennas. For the channel matrices, we assume the Kronecker model
  $\mb{H}_k=\mb{R}_k^{1/2}\mb{\Theta}_k\mb{T}_k^{1/2}$ mentioned in Sec.
  \ref{sec:system-model-mimo},
where the receive and transmit correlation matrices $\mb{R}_k$,
$\mb{T}_k$ follow an exponential profile characterized by the
correlation coefficients (see e.g.,
\cite{chiani-it-2003,skupch-pwc-2005}) $r=[r_1, r_2]$ and $t=[t_1,
t_2]$ such that $\mb{R}_k(i,j)=r_k^{|i-j|}$,
$\mb{T}_k(i,j)=t_k^{|i-j|}$. By assuming that the receive antenna is
a uniform linear array (ULA) and knowing that, when the dimensions
of Toeplitz matrices increase they can be approximated by circular
matrices we obtain that all the receive correlation matrices
$\mb{R}_k$ can be diagonalized in the same vector basis (i.e., the
Fourier basis). Thus the considered model is included in the UIU
model that we studied where $\mb{V}_k=\mb{V}$.

\emph{Fair SIC decoding versus SUD decoding.} First we compare the
results of the general space-time PA game considered in Sec.
\ref{sec:SIC}, where SIC decoding is used at the receiver, and the
game described in Sec. \ref{sec:SUD}, where SUD decoding is used.
Fig. \ref{fig1} depicts the achievable sum-rate at the equilibrium
as a function of the transmit power $P_1=P_2=P$, for the scenario
$n_r=n_t=10$, $r=[0.5, 0.2]$, $t=[0.5, 0.2]$, $\rho=3 dB$. In order
to have a fair comparison we assume that $p=\frac{1}{2}$ (on average
each user is decoded second half of the time when SIC is assumed).
We observe that, even in this scenario, which was thought to be a
bad one in terms of sub-optimality, the sum-rate obtained with the
first game is very close to the sum-capacity upper bound. Also, the
sum-rate reached when the BS uses SUD is clearly much lower than the
sum-rate obtained by using SIC.

\emph{SIC decoding, comparison between the joint space-time PA and
the special cases of spatial PA and temporal PA.} Now we want to
compare the results of the general space-time PA with the two
particular cases that were studied in \cite{belmega-wnc3-2008}: the
spatial PA, where the users are forced to allocate their power
uniformly over time (regardless of their decoding rank) but are free
to allocate their power over the transmit antennas; the temporal PA,
where the users are forced to allocate their power uniformly over
their antennas but they can adjust their power as a function of the
decoding rank at the receiver. Fig. \ref{fig2} represents the
sum-rate efficiency as a function of the coordination signal
distribution parameter $p \in [0,1]$ when $n_r=n_t=10$, $r=[0.3,
0]$, $t=[0.5, 0.2]$, $\rho=4 dB$, $P_1=5$, $P_2=50$. We observe that
the three types of power allocation policies perform very close to
the upper bound. What is most interesting is the fact that the
performance of the network at the equilibrium is better by using a
purely spatial PA instead of the most general space-time PA. This
has been confirmed by many other simulations and illustrates a Braess paradox: although the sets of strategies for the
space-time case include those of the purely spatial case, the
performance obtained at the NE are not better in the space-time
case.

\emph{SIC decoding, spatial PA, achievable rate region.} In Fig.
\ref{fig3}, we observe that the rate region achieved at the NE of the
space PA as a function of the distribution of the coordination
signal $p$ for the scenario $n_r=n_t=10$, $r=[0.4, 0.2]$, $t=[0.6,
0.3]$, $\rho=3 dB$, $P_1=5$, $P_2=50$. It is quite remarkable that
in large MIMO MACs, the capacity region comprises a full cooperation
segment just like the SISO MACs. The coordination signal precisely
allows one to move along the corresponding line. This shows the
relevance of large systems in decentralized networks since they
allow to determine the capacity region of certain systems whereas it
is unknown in the finite setting. Furthermore, they induce an
averaging effect, which makes the users' behavior predictable.



\section{CONCLUSIONS}
\label{sec:conclusions}

Interestingly, the existence and uniqueness of the Nash equilibrium can be proven in multiple access channels with multi-antenna terminals for a
general propagation channel model (namely the unitary-invariant-unitary model) and the most general case of space-time power allocation schemes.
In particular, the uniqueness proof requires a matrix generalization of the second theorem of Rosen \cite{rosen-eco-1965} and proving a trace
inequality \cite{belmega-ajmaa-2010}. For all the types of power allocation policies (purely temporal PA, purely spatial PA, space-time PA), the
sum-rate efficiency of the decentralized network is close to one when SIC is assumed and the network is coordinated by the proposed suboptimum
coordination mechanism. Quite surprisingly, the space-time power allocation performs a little worse than its purely spatial counterpart, which
puts in evidence a Braess paradox in the types of wireless networks under consideration. One of the interesting extensions of this work would be
to analyze the impact of a non-perfect SIC on the PA problem. Indeed, the effect of propagation errors could then be assessed (which does not
exist with SUD).

\appendices

\section{}
\label{appendix_1}

\subsection{Concavity of the utility functions $u_k^{\mathrm{SIC}}$}

Let us focus on user $k \in \mc{K}$. We want to prove that $u_k^{\mathrm{SIC}}(\mb{Q}_k,\mb{Q}_{-k}) $ is concave w.r.t. $\mb{Q}_k \in
\mc{A}_1^{\mathrm{SIC}}$. We observe that the term $R_k^{(\pi)}(\mb{Q}_k^{(\pi)},\mb{Q}_{-k}^{(\pi)})$ in (\ref{eq:utility-mimo}) depends only
on $\mb{Q}_k^{(\pi)}$ and $\mb{Q}_{-k}^{(\pi)}$ and not on the covariance matrices $\mb{Q}_k^{(\tau)}$, $\mb{Q}_{-k}^{(\tau)}$ for any other
possible decoding rule $\tau \in \mc{P}_K \setminus \{\pi\}$. Thus, in order to prove that $u_k^{\mathrm{SIC}}(\mb{Q}_k,\mb{Q}_k)$ is strictly concave w.r.t.
to $\mb{Q}_k =(\mb{Q}_k^{(\pi)})_{\pi \in \mc{P}_K}$, it suffices to prove that $R_k^{(\pi)}(\mb{Q}_k^{(\pi)}, \mb{Q}_{-k}^{(\pi)})$ is
concave w.r.t. $\mb{Q}_k^{(\pi)}$ for all $\pi \in \mc{P}_K$.

To this end, we study the concavity of the function $f(\lambda)= R_k^{(\pi)}(\lambda \mb{Q}_k^{(\pi)'}+(1-\lambda) \mb{Q}_k^{(\pi)''}) $ over
the interval $[0,1]$ for any pair of matrices $(\mb{Q}_k^{(\pi)'},\mb{Q}_k^{(\pi)''})$. The second derivative of $f$ is equal to:
\[
\begin{array}{lcl}
\frac{\partial^2 f}{\partial \lambda^2}(\lambda)&  =  & - \mathbb{E} \mathrm{Tr} \left[ \rho^2 \mb{H}_k^H \left( \mb{I} + \rho \mb{H}_k
\mb{Q}_k^{(\pi)''} \mb{H}_k^H + \rho \lambda \mb{H}_k \Delta \mb{Q}_k^{(\pi)} \mb{H}_k^H + \rho \ds{\sum_{\ell \in \mc{K}_{k}^{(\pi)}}}
\mb{H}_{\ell}\mb{Q}_{\ell}^{(\pi)}\mb{H}_{\ell}^H
\right)^{-1} \mb{H}_k \Delta \mb{Q}_k^{(\pi)}  \right. \\
 & & \left. \times \mb{H}_k^H \left( \mb{I} + \rho \mb{H}_k \mb{Q}_k^{(\pi)''}
\mb{H}_k^H + \rho \lambda \mb{H}_k \Delta \mb{Q}_k^{(\pi)} \mb{H}_k^H  + \rho \ds{\sum_{\ell \in \mc{K}_{k}^{(\pi)}}}
\mb{H}_{\ell}\mb{Q}_{\ell}^{(\pi)}\mb{H}_{\ell}^H  \right)^{-1} \mb{H}_k \Delta \mb{Q}_k^{(\pi)} \right] \\
&= & - \mathbb{E} \mathrm{Tr} [\mb{A}  \Delta \mb{Q}_k^{(\pi)} \mb{A} \Delta \mb{Q}_k^{(\pi)}]
\end{array},
\]
with $\mb{A} = \rho^2 \mb{H}_k^H \left( \mb{I} +  \rho \mb{H}_k \mb{Q}_k^{(\pi)''}  \mb{H}_k^H + \rho \lambda \mb{H}_k \Delta \mb{Q}_k^{(\pi)}
\mb{H}_k^H  +\rho \ds{\sum_{\ell \in \mc{K}_{k}^{(\pi)}}} \mb{H}_{\ell}\mb{Q}_{\ell}^{(\pi)}\mb{H}_{\ell}^H  \right)^{-1}   \mb{H}_k $, which
can be proven to be a Hermitian positive definite matrix, $\Delta \mb{Q}_k^{(\pi)} = \mb{Q}_k^{(\pi)'} -\mb{Q}_k^{(\pi)''} $ also a Hermitian
matrix, and $\rho=\frac{1}{\sigma^2}$.
\[
\begin{array}{lcl}
\frac{\partial^2 f}{\partial \lambda^2}(\lambda) & = & - \mathbb{E} \mathrm{Tr} [\mb{A}^{1/2}   \Delta \mb{Q}_k^{(\pi)} \mb{A}^{1/2}
\mb{A}^{1/2} \Delta \mb{Q}_k^{(\pi)}
\mb{A}^{1/2} ] \\
& = & -\mathbb{E} \mathrm{Tr}[\mb{B}\mb{B}^H] < 0
\end{array},
\]
with $\mb{B}=\mb{A}^{1/2}   \Delta \mb{Q}_k^{(\pi)} \mb{A}^{1/2} $.

\subsection{Continuity of the utility functions $u_k^{\mathrm{SIC}}$}

Considering the Leibniz formula, the determinant of a matrix can be expressed as a weighted sum of products of its entries. Knowing that the
product and the sum of continuous functions are continuous, we conclude that the determinant function is continuous. Also, it is well known that
the logarithmic function is a continuous function. Thus, for any $\pi \in \mc{P}_K$, the function $R_k^{(\pi)}(\mb{Q}_k^{(\pi)},
\mb{Q}_{-k}^{(\pi)})$ is nothing else but the composition of two continuous functions which is also continuous w.r.t.
$(\mb{Q}_k^{(\pi)},\mb{Q}_{-k}^{(\pi)})$. This suffices to prove that $u_k^{\mathrm{SIC}}(\mb{Q}_k,\mb{Q}_{-k})$ is continuous w.r.t.
$(\mb{Q}_k,\mb{Q}_{-k})$.

\subsection{Convexity of the strategy sets $\mc{A}_k^{\mathrm{SIC}}$}

In order to prove that the set $\mc{A}_k^{\mathrm{SIC}}$ is convex, we need to verify that, for any two matrices $(\mb{Q}_k^{'}, \mb{Q}_k^{''})
\in \mc{A}_k^{\mathrm{SIC}} \times \mc{A}_k^{\mathrm{SIC}}$, we have:
$$
\alpha \mb{Q}_1^{'} +(1-\alpha) \mb{Q}_{1}^{''} \in \mc{A}_k^{\mathrm{SIC}},
$$
for all $\alpha \geq 0$.

For any $\mb{Q}_k^{'}, \mb{Q}_k^{''} \in \mc{A}_k^{\mathrm{(SIC)}}$, the matrices $\mb{Q}_k^{(\pi)}$ are Hermitian which implies that $\alpha
\mb{Q}_k^{(\pi)'} +(1-\alpha) \mb{Q}_{k}^{(\pi)''}$ are also Hermitian matrices, for all $\pi \in \mc{P}_K$.

Furthermore, for any $\mb{Q}_k^{'}, \mb{Q}_k^{''} \in \mc{A}_k^{\mathrm{SIC}}$, we have that $\mb{Q}_k^{(\pi)'}$, $\mb{Q}_k^{(\pi)''}$ are
non-negative matrices which implies that $\alpha \mb{Q}_k^{(\pi)'} +(1-\alpha) \mb{Q}_{k}^{(\pi)''}$ are also non-negative matrices, for all
$\pi \in \mc{P}_K$.

Finally, knowing that the trace is a linear application we have
that:
\[
\begin{array}{lll}
\ds{\sum_{\pi \in \mc{P}_k}} p_{\pi} \mathrm{Tr} \left(\alpha \mb{Q}_k^{(\pi)'} +(1-\alpha) \mb{Q}_{k}^{(\pi)''}\right) =  \\
= \alpha \ds{\sum_{\pi \in \mc{P}_k}} p_{\pi} \mathrm{Tr} (\mb{Q}_k^{(\pi)'})
 + (1-\alpha) \ds{\sum_{\pi \in \mc{P}_k}} p_{\pi} \mathrm{Tr} (\mb{Q}_k^{(\pi)'} ) \\
  \leq \alpha n_t \ol{P}_k +(1-\alpha) n_t \ol{P}_k  & & \\
 = n_t \ol{P}_k .
\end{array}
\]

Thus $\alpha \mb{Q}_k^{'} + (1-\alpha)\mb{Q}_{k}^{''} \in \mc{A}_k^{\mathrm{SIC}}$ and the set is convex.

\subsection{Compactness of the strategy sets $\mc{A}_k^{\mathrm{SIC}}$}

To prove that the strategy sets are compact sets we use the fact
that, in finite dimension spaces, a closed and bounded set is
compact.

First let us prove that $\mc{A}_k^{\mathrm{SIC}}$ is a closed set. We define the function $g : \mc{A}_k^{\mathrm{SIC}} \longrightarrow [0, n_t
\ol{P}_k]$, with
$$f(\mb{Q}_k)=\ds{\sum_{\pi \in \mc{P}_K}}p_{\pi} \mathrm{Tr}(\mb{Q}_k^{(\pi)}).$$

We see that $g(\cdot)$ is a continuous function and that its image is a compact and thus closed set. Knowing that the continuous inverse image
of a closed set is closed, we conclude that $\mc{A}_k^{\mathrm{SIC}}$ is closed.

Now we want to prove that the set $\mc{A}_k^{\mathrm{SIC}}$ is a bounded set. We associate to the tuple of matrices $(\mb{Q}_k^{(\pi)})_{\pi
\in \mc{P}_K}$ the following norm $||  \mb{Q}_k || = \ds{\sqrt{\sum_{\pi \in \mc{P}_K} ||  \mb{Q}_k^{(\pi)} ||^2_2}}$ where $||.||_2$ is is the
spectral norm of a matrix.
$$
||  \mb{Q}_k^{(\pi)} ||_2 = \sqrt{ \max\{ \lambda_{ \mb{Q}_k^{(\pi)H}\mb{Q}_k^{(\pi)}}(i)  \}_{i=1}^n }.
$$
Since for all $\mb{Q}_k \in \mc{A}_k^{\mathrm{SIC}}$ , $\mb{Q}_k^{(\pi)}$ is a non-negative, Hermitian matrix we have that:
$$
\max\{ \lambda_{  \mb{Q}_k^{(\pi)}}(i)  \}_{i=1}^n  \leq Tr(\mb{Q}_k^{(\pi)}) \leq \infty,
$$
and thus:
$$
||  \mb{Q}_k^{(\pi)} ||_2= \sqrt{ \max\{ \lambda_{  \mb{Q}_k^{(\pi)2} }(i)  \}_{i=1}^n  } =\sqrt{ \max\{ { \lambda_{  \mb{Q}_k^{(\pi)} }( i)
 }^2 \}_{i=1}^n  }\leq \infty.
$$

In conclusion the associated norm $||\mb{Q}_k||\leq \infty$.

\section{}
\label{appendix_2}

We suppose that there exist two different equilibrium strategy
profiles: $(\widetilde{\mb{Q}}_k,\widetilde{\mb{Q}}_{-k}) \in
\mc{A}_k^{\mathrm{SIC}} \times \mc{A}_{-k}^{\mathrm{SIC}} $ and $
(\widehat{\mb{Q}}_k,\widehat{\mb{Q}}_{-k}) \in
\mc{A}_k^{\mathrm{SIC}}\times\mc{A}_{-k}^{\mathrm{SIC}}$, such that
$(\widetilde{\mb{Q}}_k,\widetilde{\mb{Q}}_{-k}) \neq
(\widehat{\mb{Q}}_k,\widehat{\mb{Q}}_{-k})$.
 Then the condition given
in the theorem, $\mc{C} > 0$ is met for the particular choice of
$(\mb{Q}_k',\mb{Q}_{-k}')=(\widetilde{\mb{Q}}_k,\widetilde{\mb{Q}}_{-k})$
and
$(\mb{Q}_k'',\mb{Q}_{-k}'')=(\widehat{\mb{Q}}_k,\widehat{\mb{Q}}_{-k})$.

By the definition of the Nash Equilibrium, the strategies
$\widetilde{\mb{Q}}_k$, $k\in\mc{K}$, are the solutions of the
following maximization problems:

$$
\max_{\mb{Q}_k \in \mc{A}_k^{\mathrm{SIC}}}
u_k(\mb{Q}_k,\widetilde{\mb{Q}}_{-k}).
$$

Thus, $\widetilde{\mb{Q}}_k$ satisfy the following Kuhn-Tucker
optimality conditions:

\newcounter{saveenum_1}
\begin{enumerate}
\item{
$\widetilde{\mb{Q}}_k \in \mc{A}_k^{\mathrm{SIC}}$, which means
that:

\[  \left\{
\begin{array}{ccll}
\widetilde{\mb{Q}}_k^{(\pi)}  =  (\widetilde{\mb{Q}}_k^{(\pi)})^H & \succeq & 0 &,\forall \pi \in \mc{P}_K \\
\ds{\sum_{\pi \in \mc{P}_K}}
p_{\pi}\mathrm{Tr}(\widetilde{\mb{Q}}_k^{(\pi)})& \leq & n_t
\ol{P}_k,
\end{array}
\right. \]


}

\item{
There exist $\widetilde{\lambda}_k \geq 0$, and the following
Hermitian non-negative matrices of rank 1,
$\widetilde{\Phi}_k^{(\pi)}$, for all $\pi \in \mc{P}_K$, such that:

\[  \left\{
\begin{array}{ccll}
\widetilde{\lambda}_k\left[ \ds{\sum_{\pi \in \mc{P}_{K}}} p_{\pi} \mathrm{Tr}(\widetilde{\mb{Q}}_k^{(\pi)}) - n_t \ol{P}_k\right] & = & 0 \\
\mathrm{Tr}(\widetilde{\Phi}_k^{(\pi)} \widetilde{\mb{Q}}_k^{(\pi)})
& =& 0  &, \forall \pi \in \mc{P}_K,
\end{array}
\right. \]


}

\item{
\[
\left\{\begin{array}{lcl} \forall \pi \in \mc{P}_K :&&\\
\nabla_{\mb{Q}_k^{(\pi)}}u_k
(\widetilde{\mb{Q}}_k,\widetilde{\mb{Q}}_{-k}) & = &
p_{\pi}\widetilde{\lambda}_k \mb{I} - \widetilde{\Phi}_k^{(\pi)}
\end{array}\right. ,
\]

%

}

\setcounter{saveenum_1}{\value{enumi}}
\end{enumerate}

Having assumed that $(\widehat{\mb{Q}}_k,\widehat{\mb{Q}}_{-k})$ is
also a Nash Equilibrium, $\widehat{\mb{Q}}_k$, with $k \in \mc{K}$
are the solution of:
$$
\max_{\mb{Q}_k \in \mc{A}_k^{\mathrm{SIC}}}
u_k(\mb{Q}_k,\widehat{\mb{Q}}_{-k}),
$$
%
%
and thus $\widehat{\mb{Q}}_k$ satisfy the following Kuhn-Tucker
optimality conditions:

\begin{enumerate}
\setcounter{enumi}{\value{saveenum_1}}

\item{
$\widehat{\mb{Q}}_k \in \mc{A}_k^{\mathrm{SIC}}$, which means that:

\[  \left\{
\begin{array}{ccll}
\widehat{\mb{Q}}_k^{(\pi)} =  (\widehat{\mb{Q}}_k^{(\pi)})^H  & \succeq & 0 &,\forall \pi \in \mc{P}_K \\
\ds{\sum_{\pi \in \mc{P}_K}} p_{\pi}\mathrm{Tr}
(\widehat{\mb{Q}}_k^{(\pi)}) & \leq & n_t \ol{P}_k,
\end{array}
\right. \]

%
}

\item{
There exist $\widehat{\lambda}_k \geq 0$, $k \in \mc{K}$ and the
following non-negative, Hermitian matrices of rank 1,
$\widehat{\Phi}_k^{(\pi)}$, for all $\pi \in \mc{P}_K$ such that:

\[  \left\{
\begin{array}{ccll}
\widehat{\lambda}_k\left[\ds{\sum_{\pi \in \mc{P}_K}} p_{\pi} \mathrm{Tr}(\widehat{\mb{Q}}_k^{(\pi)}) - n_t \ol{P}_k \right] & = & 0 \\
\mathrm{Tr}(\widehat{\Phi}_k^{(\pi)} \widehat{\mb{Q}}_k^{(\pi)}) &
=& 0 &, \forall \pi \in \mc{P}_K,
\end{array}
\right. \]


}
\item{

\[
\left\{ \begin{array}{lcl} \forall \pi \in \mc{P}_K : \\
\nabla_{\mb{Q}_k^{(\pi)}}{u_k}
(\widehat{\mb{Q}}_k,\widehat{\mb{Q}}_{-k}) & = &
p_{\pi}\widehat{\lambda}_k \mb{I} - \widehat{\Phi}_k^{(\pi)}
\end{array}. \right.
\]


}

\end{enumerate}

Using the third and the sixth optimality conditions, the condition
given in (\ref{cond_ext_rosen_sic}) becomes:
\[
\begin{array}{lcl}
 \mc{C} &= &
 \ds{\sum_{\pi \in \mc{P}_K} \sum_{k=1}^K} \left\{  p_{\pi}\tilde{\lambda}_k \mathrm{Tr}(\widehat{\mb{Q}}^{(\pi)}_k)+p_{\pi}\hat{\lambda}_k \mathrm{Tr}(\widetilde{\mb{Q}}^{(\pi)}_k)-p_{\pi}\tilde{\lambda}_k
 \mathrm{Tr}(\widetilde{\mb{Q}}^{(\pi)}_k)-p_{\pi}\hat{\lambda}_k \mathrm{Tr}(\widehat{\mb{Q}}^{(\pi)}_k) -\right.\\
 & & \left. \mathrm{Tr}(\widehat{\mb{Q}}^{(\pi)}_k \widetilde{\mb{\Phi}}^{(\pi)}_k )
 - \mathrm{Tr}(\widetilde{\mb{Q}}^{(\pi)}_k \widehat{\mb{\Phi}}^{(\pi)}_k ) +  \mathrm{Tr}(\widetilde{\mb{Q}}^{(\pi)}_k \widetilde{\mb{\Phi}}^{(\pi)}_k ) +  \mathrm{Tr}(\widehat{\mb{Q}}^{(\pi)}_k \widehat{\mb{\Phi}}^{(\pi)}_k
 ) \right\}\\
& \leq  & \ds{\sum_{k=1}^K }\left\{\tilde{\lambda}_k
\left[\ds{\sum_{\pi \in \mc{P}_K}}p_{\pi}
\mathrm{Tr}(\widehat{\mb{Q}}^{(\pi)}_k)- n_t \ol{P}_k\right] +
\hat{\lambda}_k
\left[\ds{\sum_{\pi \in \mc{P}_K}}p_{\pi}\mathrm{Tr}(\widetilde{\mb{Q}}^{(\pi)}_k)-n_t\ol{P}_k\right] \right\}\\
& \leq & 0.
\end{array}
\]
From the other four K-T conditions, we obtain that all the terms on
the right are negative and thus $ \mc{C} \leq 0 $. But this
contradicts the diagonally strict concavity condition and so the
Nash Equilibrium is unique.

\section{}
\label{appendix_5}

We want to prove that there is no optimality loss when restricting
the search for the optimal covariance matrices to $\mb{Q}_k \in
\mc{A}_k^{\mathrm{SIC}}$ such that $\mb{Q}_k^{(\pi)}
=\mb{W}_k\mb{P}_k^{(\pi)}\mb{W}_k^H$, for all $\pi \in \mc{P}_K$.
Let us consider user $k\in \mc{K}$. We have that:
\begin{equation}
\begin{array}{lcl}
& & \ds{\arg \max_{\mb{Q}_k \in \mc{A}_k^{\mathrm{SIC}}} u_k(\mb{Q}_k,\mb{Q}_{-k})} \\
& = & \ds{ \arg \max_{\mb{Q}_k  \in \mc{A}_k^{\mathrm{SIC}}} \left\{
\ds{\sum_{\pi \in \mc{P}_K}} p_{\pi} \mathbb{E} \log_2 \left|\mb{I}
+ \rho \mb{H}_k \mb{Q}_k^{(\pi)}\mb{H}_k^H
 + \rho \ds{\sum_{\ell \in \mc{K}_{k}^{(\pi)}}} \mb{H}_{\ell} \mb{Q}_{\ell}^{(\pi)}\mb{H}_{\ell}^H \right| \right\} }\\
&= & \ds{ \arg \max_{\mb{Q}_k\in \mc{A}_k^{\mathrm{SIC}}}} \left\{
\ds{\sum_{\pi \in \mc{P}_K}} p_{\pi}\mathbb{E} \log_2 \left|\mb{I} +
\rho \mb{V}\tilde{\mb{H}}_k
\mb{W}_k^H\mb{Q}_k^{(\pi)}\mb{W}_k\tilde{\mb{H}}_k^H\mb{V}^H
 + \rho \ds{\sum_{\ell \in \mc{K}_{k}^{(\pi)}}} \mb{V}\tilde{\mb{H}}_{\ell}
\mb{W}_{\ell}^H\mb{Q}_{\ell}^{(\pi)}\mb{W}_{\ell}\tilde{\mb{H}}_{\ell}^H\mb{V}^H
\right|
\right\}\\
& = &\ds{ \arg \max_{\mb{Q}_k \in \mc{A}_k^{\mathrm{SIC}}}} \left\{
\ds{\sum_{\pi\in \mc{P}_K}} p_{\pi} \mathbb{E} \log_2 \left|\mb{I} +
\rho \tilde{\mb{H}}_k
\mb{W}_k^H\mb{Q}_k^{(\pi)}\mb{W}_k\tilde{\mb{H}}_k^H
 + \rho \ds{\sum_{\ell \in \mc{K}_{k}^{(\pi)}}}\tilde{\mb{H}}_{\ell}
\mb{W}_{\ell}^H\mb{Q}_{\ell}^{(\pi)}\mb{W}_{\ell}\tilde{\mb{H}}_{\ell}^H
\right|
\right\} \\
& = & \ds{\arg \max_{\mb{Q}_k\in \mc{A}_k^{\mathrm{SIC}}}} \left\{
\ds{\sum_{\pi \in \mc{K}_{k}^{(\pi)}}}\mathbb{E} \log_2 \left|\mb{I}
+ \rho \tilde{\mb{H}}_k \mb{X}_k^{(\pi)}\tilde{\mb{H}}_k^H + \rho
\ds{\sum_{\ell \in \mc{K}_{k}^{(\pi)}}}\tilde{\mb{H}}_{\ell}
\mb{W}_{\ell}^H\mb{Q}_{\ell}^{(\pi)}\mb{W}_{\ell}
\tilde{\mb{H}}_{\ell}^H \right| \right\}
\end{array},
\end{equation}
where we denoted with $\mb{X}_k^{(\pi)}\triangleq \mb{W}_k^H
\mb{Q}_k^{(\pi)}\mb{W}_k$. Knowing that the utility function is
concave w.r.t. the new defined matrices $\mb{X}_k^{(\pi)}$, and the
channel matrix $\mb{H}_k$ has independent entries, we can directly
apply the results given in \cite{tulino-wc-2006} to prove that
annulling the non-diagonal entries of $\mb{X}_k^{(\pi)}$ can only
increase the values of the functions $\mathbb{E} \log_2 \left|\mb{I}
+ \rho \tilde{\mb{H}}_k \mb{X}_k^{(\pi)}\tilde{\mb{H}}_k^H  + \rho
\ds{\sum_{\ell\in\mc{K}_{k}^{(\pi)}}}\tilde{\mb{H}}_{\ell}
\mb{W}_{\ell}^H\mb{Q}_{\ell}^{(\pi)}\mb{W}_{\ell}\tilde{\mb{H}}_{\ell}^H
\right|$. In conclusion the optimal matrices $\mb{X}_k^{(\pi)}$ are
diagonal, that we will denote with $\mb{P}_k^{(\pi)}$. The spectral
decomposition of the optimal covariance matrices are:
$\mb{Q}_k^{(\pi)}= \mb{W}_k \mb{P}_k^{(\pi)}\mb{W}_k^H$.


\begin{figure}
  \begin{center}
    \includegraphics[scale=0.50]{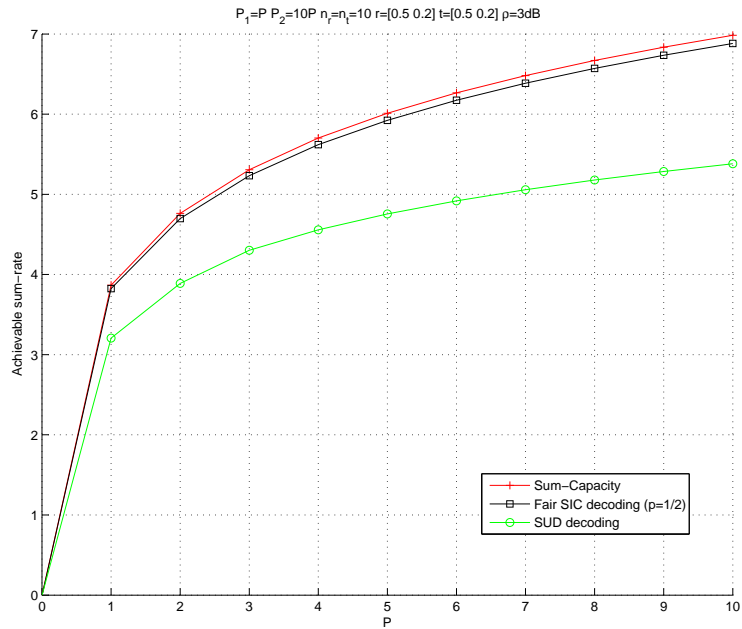}
  \end{center}
  \caption{ \footnotesize Fair SIC (joint space-time power allocation) vs. SUD decoding. Achievable network sum-rate versus the available transmit power $P$
 for $p=\frac{1}{2}$, $n_r=n_t=10$, $r=[0.5, 0.2]$, $t=[0.5, 0.2]$, $\rho=3 dB$. The fair SIC performs much closer to the sum-capacity upper bound than SUD.}
  \label{fig1}
\end{figure}

\begin{figure}
  \begin{center}
    \includegraphics[scale=0.50]{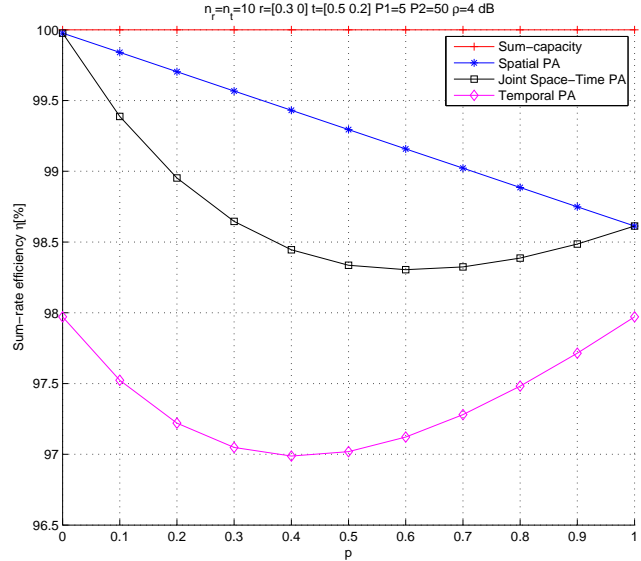}
  \end{center}
  \caption{ \footnotesize SIC decoding, comparaison between the joint space-time PA and the two special cases: the space PA and temporal PA. Sum-rate efficiency versus the distribution of the coordination signal $p \in [0,1]$
 for $n_r=n_t=10$, $r=[0.3, 0]$, $t=[0.5, 0.2]$, $\rho=4 dB$, $P_1=5$, $P_2=50$. The spatial PA outperforms the joint space-time PA (Braess paradox).}
  \label{fig2}
\end{figure}

\begin{figure}
  \begin{center}
    \includegraphics[scale=0.50]{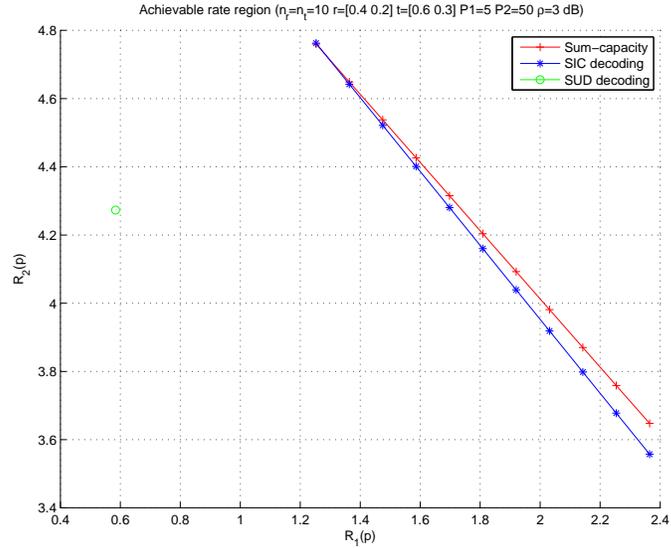}
  \end{center}
  \caption{ \footnotesize SIC decoding, space PA. The achievable rate region at the NE versus the distribution of the coordination signal $p \in [0,1]$
 for $n_r=n_t=10$, $r=[0.4, 0.2]$, $t=[0.6, 0.3]$, $\rho=3 dB$, $P_1=5$, $P_2=50$. Varying $p$ allows to move along a segment close to the sum-capacity, similar to SISO MAC.}
  \label{fig3}
\end{figure}


\bibliography{biblio}

\end{document}